\newcommand{\amsHackForListFollowsLemma}{\mbox{}\vspace*{-\parsep}\vspace*{-\baselineskip}}
\newtheoremstyle{Susanne}	
{10pt}						
{10pt}						
{\slshape}					
{}							
{\bfseries}					
{}							
{\newline}%
{}%
\theoremstyle{Susanne}
\newtheorem{thm}{Theorem}[section]
\newtheorem{prop}[thm]{Proposition}
\newtheorem{lemma}[thm]{Lemma}
\newtheorem{cor}[thm]{Corollary}
\theoremstyle{definition}
\theoremstyle{remark}
\newtheorem*{rem}{Remark} 	
\numberwithin{equation}{section} 	
\newcommand{\Z}{\mathbb{Z}}
\newcommand{\N}{\mathbb{N}}
\newcommand{\R}{\mathbb{R}}
\renewcommand{\div}{\operatorname{div}}
\DeclareMathOperator{\Cov}{Cov} 	
\newcommand {\T} {\mathbb T} 	
\newcommand{\C}{C_c^{\infty}} 	
\newcommand{\mA}{\mathcal A} 	
\newcommand{\mC}{\mathcal C} 	
\newcommand{\mL}{\mathcal L} 	
\newcommand{\loc}{\textrm{loc}}
\newcommand{\de}{\mathrm{d}}		
\newcommand*{\defeq}{\mathrel{\vcenter{\baselineskip0.5ex \lineskiplimit0pt
                     \hbox{\scriptsize.}\hbox{\scriptsize.}}}%
                     =}			
\begin{document}

\title{Scaling limit and convergence of smoothed covariance for gradient models with non-convex potential}
\author{Susanne Hilger}
\maketitle

\begin{abstract} 
A discrete gradient model for interfaces is studied.
The interaction potential is a non-convex perturbation of the quadratic gradient potential. Based on a representation for the finite volume Gibbs measure obtained via a renormalization group analysis by Adams, Koteck\'{y} and M\"uller in \cite{AKM15} it is proven that the scaling limit is a continuum massless Gaussian free field. From probabilistic point of view, this is a Central Limit Theorem for strongly dependent random fields. Additionally, the convergence of covariances, smoothed on a scale smaller than the system size, is proven.
\end{abstract}



\section{Introduction}

We analyze discrete gradient models which are effective models for random interfaces. Let $\Lambda \subset \Z^d$ be a finite set. To each configuration $\varphi \in \R^{\Lambda}$ ($\varphi(x)$ can be interpreted as the height of the interface at site $x$) an energy $H_{\Lambda}(\varphi)$ is assigned, where the Hamiltonian is assumed to be of gradient type,
$$
H_{\Lambda}(\varphi) = \sum_{x \in \Lambda} \sum_{i = 1}^d W(\nabla_i \varphi (x))
\quad \text{with} \quad
\nabla_i \varphi (x) = \varphi(x+e_i) - \varphi(x).
$$
We consider tilted boundary conditions, i.e., for $u \in \R^d$
$$
\psi_u(x) =  x \cdot u 
\quad \text{for} \quad
 x \in \partial \Lambda= \lbrace z \in \Z^d \setminus \Lambda: \vert z-x \vert = 1 \text{ for some } x \in \Lambda \rbrace.
$$
The finite volume Gibbs measure for inverse temperature $\beta$ is given by
$$
\gamma_{\Lambda}^{\psi_u} (\de \varphi) = \frac{1}{Z_{\Lambda}^{\psi_u}} \,\, e^{-\beta H_{\Lambda}(\varphi)} \prod_{x \in \Lambda} \de \varphi(x) \prod_{x \in \partial \Lambda} \delta_{\psi_u(x)}(\de \varphi (x))
$$
where $Z_{\Lambda}^{\psi_u}$ is the partition function which normalizes the measure.
Later it will be more convenient to work 
with periodic boundary condition rather than Dirichlet boundary condition because the problem then remains translation invariant.
Imposing a tilt $u$ corresponds to working with functions such that $x \mapsto \varphi - x \cdot u $ is periodic. 
This can be reduced to the study of periodic functions by replacing the expression  $W(\nabla_i \varphi(x))$ in the Hamiltonian
by $W(\nabla_i \varphi(x) + u_i)$, see \eqref{shifted potential} below.
For the equivalence of various ways of imposing a tilt, at least on the level of the thermodynamic limit of the free energy,
see \cite{KL14}.

~\\
In the case of strictly convex, symmetric $W$ a lot is known: The infinite volume gradient Gibbs measure exists and is uniquely determined by the tilt \cite{FS97}. The long distance behaviour is described by the Gaussian free field (see \cite{NS97} and \cite{GOS01}) and the decay of the covariance is polynomial as in the massless Gaussian case \cite{DD05}.
Moreover the surface tension is strictly convex \cite{DGI00}.
A nice overview of these results and the used techniques can be found in \cite{Fun05} or \cite{Vel06}.

~\\
Much less is known for models with non-convex potentials. At moderate temperature and zero tilt Biskup and Koteck\'{y} showed in \cite{BK07} the existence of two ergodic infinite volume Gibbs measures for a particular non-convex potential, a mixture of two centered Gaussians.
For this potential it can nevertheless be shown that both gradient Gibbs measures scale to a Gaussian free field \cite{BS11}.

The high temperature regime of potentials of the form
$$
W(\eta) = W_0 (\eta) + g_0 (\eta),
\quad W_0 \text{ strictly convex },
\quad \sqrt{\beta} \Vert g \Vert_{L^1} \text{ small}
$$
is analyzed in \cite{CDM09} and \cite{CD12}. The authors prove strict convexity of the surface tension, uniqueness of the ergodic gradient Gibbs measure, scaling to the Gaussian free field and polynomial decay of the covariance for any tilt. Note that smallness of  $\sqrt{\beta} \Vert g \Vert_{L^1}$ still allows for non-convex $W$.

For low temperatures a finite range decomposition of the Gaussian measure and renormalization group techniques in the spirit of \cite{Bry09} can be used to get first results for potentials which are small non-convex perturbations of the quadratic potential, i.e.,
$$
W(\eta) = \frac{1}{2} \eta^2 + V(\eta),
\quad V \text{ small, but $W$ nonconvex}. 
$$
In \cite{AKM15} a representation for the finite volume Gibbs measure is obtained and is applied there to show strict convexity of the surface tension for small tilt $u$.

The objective of this note is to use the results of \cite{AKM15} to prove a Central Limit Theorem for these models and to show that their behaviour at long distances is governed by a suitable Gaussian free field. On a slightly finer scale we also prove convergence of the covariances of this model.

~\\
Note that in estimates a constant is always denoted by $C$ but can change from line to line.


\section{Setting and Results}

\subsection{Setting}

Let the potential $W: \R \rightarrow \R$ be a perturbation of the quadratic potential,
\begin{align}
W(\eta) = \frac{1}{2} \eta^2 + V(\eta), \quad V: \R \rightarrow \R. \label{potential}
\end{align}

Following \cite{FS97}, we enforce tilted boundary conditions and simultaneously shift invariance by considering fields on the torus and a shifted potential $W(\cdot + u_i)$:
For $L > 0$ a large fixed odd integer consider the discrete torus $\mathbb{T}^d_N = \left(\Z/L^N\Z\right)^d$ of side length $L^N$. This torus can
be represented by the cube $\Lambda_N$ of length $L^N$,
\begin{align}
\Lambda_N \defeq \left\{ x \in \Z^d: |x|_{\infty} \leq \frac{1}{2}(L^N-1) \right\},
\end{align}
equipped with the metric
$d_{\text {per}}(x,y) \defeq  |x-y|_{\text {per}}  \defeq \inf \left\{ |x-y+k|_{\infty}: k \in \left( L^N\Z \right)^d \right\}$ where $|x|_{\infty}\defeq\max_{i=1, \ldots, d}|x_i|$.

Due to the gradient type of the Hamiltonian we restrict to fields with mean value zero,
\begin{align}
\chi_N \defeq \{ \varphi : \Lambda_N \rightarrow \R, ~ \sum_{x \in \Lambda_N} \varphi(x) = 0 \},
\end{align}
equipped with the scalar product $(\varphi, \psi) = \sum_{x \in \Lambda_N} \varphi(x) \psi(x)$.

~\\
By \eqref{potential} the shifted Hamiltonian can be written as
\begin{align}
H_N(\varphi)
= \mathcal E_N(\varphi) + \frac{1}{2} L^{Nd} \left| u \right|^2 + \sum_{x\in \Lambda_N} \sum_{i=1}^d V(\nabla_i \varphi(x) + u_i),
\label{shifted potential}
\end{align}
where
$$
\mathcal E_N(\varphi)
\defeq \frac{1}{2} (\mA^0 \varphi, \varphi)
= \frac{1}{2} \sum_{x\in \Lambda_N} \sum_{i=1}^d \left( \nabla_i \varphi(x) \right)^2,
\quad \mA^0 = \sum_{i,j} \delta_{ij} \nabla_j^* \nabla_i
$$
with the adjoint difference operator $\nabla_i^* \varphi (x) = \varphi(x-e_i)- \varphi(x)$.
The finite volume Gibbs measure is
\begin{align}
\nu_{H_N} (\de \varphi) = \frac{1}{Z_{H_N}} \, e^{- \beta H_N (\varphi)} \, \de \lambda_N (\varphi)
\end{align}
where $Z_{H_N}= \int_{\chi_N} e^{- \beta H_N(\varphi)} \, \de \lambda_N (\varphi)$ is the partition function and $\de\lambda_N$ denotes the $(L^{Nd}-1)$-dimensional Hausdorff measure  on $\chi_N$.

\begin{rem}
To simplify the notation we set $\beta = 1$ in the following.
Note further that we do not make explicit the dependence on $u$ in the notation since it plays no special role here.
\end{rem}


\subsection{Results}

For computing the scaling limit we define, for $f \in \C(\T^d;\R^d)$,
\begin{align}
f^N(x) \defeq L^{-N\frac{d}{2}} f(L^{-N}x) \label{scaled f}
\end{align}
and introduce the slowly varying scaled field
\begin{align}
\nabla \varphi (f^N) \defeq (\nabla\varphi, f^N) \text{ for } x \in \Lambda_N.
\end{align}
We describe the distribution of this random vector by the Laplace transform
\begin{align}
\int e^{- (\nabla\varphi, f^N)} \,\nu_{H_N} (\de \varphi) \quad \textrm{for } f \in \R^{\Lambda_N}.
\end{align}

We are interested in the existence of some limiting distribution $(\nabla\varphi)^{\text{scale}}$ which is in general a random field on distributions.
Then the limiting distribution, if it exists, is the joint distribution $\nu^{\text{scale}}$ for generalized gradient random fields $\left((\nabla\varphi)^{\text{scale}}(f) \right)_{f \in C_c^{\infty}(\mathbb T^d)}$,\linebreak $\T^d~=~\R^d/\Z^d$, such that
$$
\mathbb E_{\nu^{\text{scale}}} \left( e^{ -\nabla\varphi^{\text{scale}} (f)} \right)
= \lim_{N \rightarrow \infty}
\int_{\chi_N} e^{-(\nabla\varphi, f^N)} \,\nu_{H_N}(\de\varphi)
\quad \textrm{for all }f \in C_c^{\infty}(\mathbb T^d).
$$

We will show that for small initial perturbation of the quadratic potential the measure $\nu_{H_N}$ tends to a continuum Gaussian free field  with a renormalized covariance in the sense of convergence of Laplace transforms of the measures.

For stating the smallness condition on $V$ we introduce the second order Taylor remainder $U$ of $V$, 
$$U(s,t) = V(s+t) - V(t) - V'(t)s,$$
and set, for $z \in \R^d$,
\begin{align}
\mathcal K (z) = e^{- \sum_{i=1}^d U(z_i, u_i)} - 1 .
\end{align}
{We define, for  a multiindex $\alpha = (\alpha_1, \ldots, \alpha_d)$ with $ \alpha_i \in \N$, the length 
$|\alpha| = \sum_{i=1}^d \alpha_i$  and the operator 
$\partial ^{\alpha} = \prod_{i=1}^d \partial_i^{\alpha_i}$ with $\partial^0_i \defeq 1$. For  $\zeta >0$ and $r_0 \in \N$
we define  the  norm
\begin{align}
\left\| \mathcal K \right\|_{\zeta}
= \sup_{z \in \R^d} \sum_{|\alpha| \leq r_0} \zeta^{|\alpha|} \left| \partial_z^{\alpha} \mathcal K(z) \right| e^{- \zeta^2|z|^2}.
\end{align}

\begin{thm}\label{Scaling limit}
There is $\zeta > 0$, $r_0 \in \N$ and $\rho > 0$ such that for all $V$ with $\Vert\mathcal K \Vert_{\zeta} \leq \rho$ there is $(\bar{q}_{ij})_{i,j \in \lbrace 1, \ldots, d \rbrace}$ satisfying for any $f \in \C (\T^d; \R^d)$ on a subsequence
$$
\int e^{-(\nabla \varphi, f^N)} \nu_{H_N}(\de \varphi)
\,\rightarrow\, e^{\frac{1}{2}(\div f, \,\mC \div f)_{L^2(\T^d)}}
$$
as $N$ tends to infinity, where the right hand side is the Laplace transform of the continuum Gaussian free field on $\T^d$ with covariance $\mC = (\mA)^{-1}$,
$$
\mA \defeq - \sum_{i,j = 1}^d \left(\delta_{ij} + \bar{q}_{ij}\right) \partial_j \partial_i.
$$
\end{thm}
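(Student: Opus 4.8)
The plan is to realize the tilted finite-volume Gibbs measure $\nu_{H_N}$ through the renormalization-group representation of \cite{AKM15}, in which the partition function and the perturbed density are decomposed along a finite-range multiscale expansion of the Gaussian measure associated with $\mathcal E_N$. Writing the Laplace transform $\int e^{-(\nabla\varphi, f^N)}\,\nu_{H_N}(\de\varphi)$, I would first perform the standard Gaussian shift in the reference measure: complete the square against the quadratic part $\tfrac12(\mathcal A^0\varphi,\varphi)$ so that the linear term $(\nabla\varphi, f^N) = (\varphi, \nabla^* f^N)$ produces a factor $\exp\bigl(\tfrac12 (\nabla^* f^N,\, (\mathcal A^0)^{-1}\nabla^* f^N)\bigr)$ times a ratio of two perturbed partition functions, one with the potential $V$ evaluated along the shifted field and one without. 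The key input from \cite{AKM15} is that, after the renormalization-group flow, the perturbation contributes only a quadratic correction: the effective quadratic form picks up a symmetric matrix $(\bar q_{ij})$, small when $\|\mathcal K\|_\zeta \le \rho$, and a remainder that vanishes in the scaling limit. This is exactly why the smallness hypothesis is phrased through $\mathcal K$ and the norm $\|\cdot\|_\zeta$: these are the objects controlled by the RG analysis.

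Next I would pass to the continuum. The scaled test function $f^N(x) = L^{-Nd/2} f(L^{-N}x)$ is chosen precisely so that the discrete quadratic form $(\nabla^* f^N,\, (\mathcal A^0 + \text{correction})^{-1}\nabla^* f^N)$ converges, as $N\to\infty$, to its continuum analogue. Concretely, $\nabla^* f^N$ is a discretization of $-\div f$ at scale $L^{-N}$, and the lattice operator $\mathcal A^0$ rescales to $-\Delta$; together with the RG-generated correction $\bar q_{ij}$ this gives the continuum operator $\mathcal A = -\sum_{i,j}(\delta_{ij}+\bar q_{ij})\partial_j\partial_i$. Hence the exponent converges to $\tfrac12 (\div f,\, \mathcal C\, \div f)_{L^2(\T^d)}$ with $\mathcal C = \mathcal A^{-1}$, which is the Laplace transform of the claimed continuum Gaussian free field. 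The convergence here is a Riemann-sum / Fourier-multiplier computation: decompose $f$ in Fourier modes on $\T^d$, note that each fixed mode sees the lattice operator converge to the corresponding continuum symbol, and control the tail uniformly using the smoothness of $f \in C_c^\infty$.

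The main obstacle is the second point: showing that the perturbative corrections produced by the RG flow genuinely collapse to a \emph{constant} matrix $(\bar q_{ij})$ in the limit, with all higher-order and non-translation-invariant pieces controlled uniformly in $N$. This is where I expect to lean hardest on \cite{AKM15}: their norms on the RG coordinates give exponential contraction of the irrelevant parts of the flow, so the only surviving relevant contribution is the quadratic one, and its coefficients stabilize. Extracting a single limit matrix $(\bar q_{ij})$ may require passing to a subsequence — which is why the statement is only claimed along a subsequence — because a priori one only knows boundedness and not convergence of the scale-$N$ effective coefficients; a compactness argument in the (finite-dimensional) space of symmetric matrices then yields the subsequential limit, and positivity of $\mathcal A$ (hence invertibility, giving a well-defined $\mathcal C$) follows from the smallness of $\bar q_{ij}$ relative to the identity. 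A secondary technical point is handling the boundary/zero-mean constraint defining $\chi_N$ and the torus geometry, but since we work with gradients and mean-zero fields this only affects the zero Fourier mode, which drops out of $(\div f, \mathcal C\div f)$ for $f$ with the relevant integrability, and can be dealt with as in \cite{AKM15}.
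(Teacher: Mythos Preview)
Your high-level outline is right in spirit but the order of operations is off in a way that matters. The paper does \emph{not} complete the square against $\mathcal A^0$. Instead, the tuning parameter $q=q(\mathcal K,N)$ from \cite{AKM15} is introduced \emph{before} any Gaussian shift: one writes $\nu_{H_N}(\de\varphi)=F(\varphi)\mu^q(\de\varphi)\big/\int F\,\de\mu^q$ with $\mu^q$ the Gaussian of covariance $\mathcal C^q=(\mathcal A^q)^{-1}$, $\mathcal A^q=\sum_{i,j}(\delta_{ij}+q_{ij})\nabla_j^*\nabla_i$. The square is then completed against $\mu^q$, giving directly
\[
\int e^{-(\nabla\varphi,f^N)}\,\nu_{H_N}(\de\varphi)
= e^{\frac12(g^N,\,\mathcal C^q g^N)}\,\frac{F\ast\mu^q(-\mathcal C^q g^N)}{F\ast\mu^q(0)},
\qquad g^N=\sum_l\nabla_l^* f_l^N.
\]
The point of the AKM tuning is that with \emph{this} $q$ the representation $F\ast\mu^q(\xi)=1+\int K_N(\varphi+\xi)\,\mu^q_{N+1}(\de\varphi)$ holds with $\|K_N\|_{N,\Lambda_N}\le C\eta^N$; so the ratio tends to $1$ once one checks the large-field regulator $w_N^{\Lambda_N}(-\mathcal C^q g^N)$ stays bounded (this is a separate lemma, and it uses that the scaling of $f^N$ exactly cancels the $L^N$ weights in $G_{N,x}$ and $g_{N,x}$). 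The matrix $\bar q$ is then just the subsequential limit of the $q^N$, extracted by Bolzano--Weierstrass from $\|q^N\|\le\tfrac12$. In your version---shift against $\mathcal A^0$, then hope the RG ``produces'' $\bar q$---the ratio of perturbed partition functions would still carry a \emph{relevant} quadratic part that does not contract, and the AKM machinery as stated does not apply to it; the whole purpose of the fine-tuning is to put $q$ on the input side so that only the irrelevant $K_N$ survives.

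A smaller point: the paper does not use a Fourier/Riemann-sum argument for the convergence $(g^N,\mathcal C^q g^N)\to(\div f,\mathcal C\,\div f)_{L^2(\T^d)}$. It rescales to a fixed torus, solves the discrete elliptic problem $\mathcal A_N u_N=\div_N^* f$, gets a uniform $L^2$ bound on $D_N u_N$ by the energy estimate, and passes to a weak limit which is identified with the continuum solution. Your Fourier approach would also work here (everything is translation-invariant on the torus and $q^N\to\bar q$), so this is a genuine alternative rather than a gap; the paper's PDE/compactness route has the advantage that it transfers without change to the $\R^d$ setting needed for Theorem~\ref{Smoothed covariance}.
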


\begin{rem}
This is a Central Limit Theorem for strongly dependent random fields. Indeed, let $f$ be an approximation of the characteristic function of $Q(0) = \left( -\frac{1}{2}; \frac{1}{2}  \right)^d $. By the above Theorem the limiting distribution of
$$
\phi^N \defeq L^{-N\frac{d}{2}} \sum_{x \in \Lambda_N} \nabla \varphi (x)
$$
is Gaussian. This also explains the choice of the scaling factor used here which is the typical one in Central Limit Theorems. The classical Central Limit Theorem cannot be applied due to the long-range gradient-gradient correlations of the measure.
\end{rem}

~\\
Theorem \ref{Scaling limit} captures the limiting behaviour if we average over the whole system of scale $L^{dN}$ and rescale. 
If we are not interested in the full distribution but only in covariances we can also show convergence
to the covariances of a Gaussian measure if we only average over $L^{\alpha d N}$ many points, $\alpha < 1$.
To state this result precisely
fix $a,b \in  \R^d$ and, for $z = a,b$, let $J_z \in \C\left(Q(z);\R^d\right)$, where $Q(z) = z + \left(-\frac{1}{2},\frac{1}{2} \right)^d$.
Define for $0 < \alpha \leq 1$
\begin{align}
J^N_z (x)
\defeq L^{-\alpha N \frac{d}{2}} J_z \left(\frac{x}{L^{\alpha N}}\right)
\quad \textrm{for } x \in \Lambda_N.
\label{scaled J}
\end{align}

\begin{thm}\label{Smoothed covariance}
For $\zeta$, $r_0$, $\rho$, $V$ and $\bar{q}$ as in Theorem \ref{Scaling limit} there is $\alpha_0 < 1$ such that for any $\alpha \in \left[ \alpha_0;1 \right]$ and for all $J_z \in \C \left((Q(z); \R^d\right)$ on a subsequence
$$
\Cov _{\nu_{H_N}}\left( (\nabla \varphi, J_a^N), (\nabla \varphi J_b^N) \right)
\rightarrow
\left( \div^* J_a, \mC \div^* J_b \right)_{L^2(\R^d)}
$$
as $N$ tends to infinity, where the right hand side is the covariance of the continuum Gaussian free field
on $\R^d$, i.e., $\mC$ is the inverse of the operator $\sum_{i,j=1}^d \left(\delta_{ij} + \bar q_{ij}\right) \partial_j \partial_i$
defined on functions on $\R^d$.
\end{thm}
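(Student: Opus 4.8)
The plan is to recover the covariance as a second mixed derivative of a log-Laplace transform and then to rerun the argument behind Theorem~\ref{Scaling limit} with the averaging scale $L^{\alpha N}$ in place of the system scale $L^N$. For $s,t$ near $0$ I set $f=f_{s,t}=sJ^N_a+tJ^N_b\in\R^{\Lambda_N}$ and use
$$
\Cov_{\nu_{H_N}}\!\big((\nabla\varphi,J^N_a),(\nabla\varphi,J^N_b)\big)
=\frac{\partial^2}{\partial s\,\partial t}\Big|_{s=t=0}\log\int_{\chi_N}e^{-(\nabla\varphi,f_{s,t})}\,\nu_{H_N}(\de\varphi).
$$
First I would upgrade the convergence of Laplace transforms to convergence together with its first two derivatives in $(s,t)$: extending to complex $(s,t)$, the representation of \cite{AKM15} shows the map is analytic near the origin and bounded there uniformly in $N$, so Vitali's theorem (Cauchy estimates) allows differentiation under the limit. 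From here the argument is deterministic analysis of the representation.

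Next I would insert the \cite{AKM15} representation, which writes $\int e^{-(\nabla\varphi,f)}\nu_{H_N}(\de\varphi)$ as the Gaussian functional of a renormalised quadratic form $\sum_{i,j}(\delta_{ij}+q^{(N)}_{ij})\nabla^*_j\nabla_i$ times a remainder factor whose logarithm, and its derivatives in $f$, are controlled via $\|\mathcal K\|_\zeta\le\rho$. Applying $\partial_s\partial_t|_{0}$ splits the covariance into (i) a Gaussian part $(J^N_a,G_N J^N_b)=(\div^* J^N_a,\widetilde G_N\,\div^* J^N_b)$, with $G_N$ the renormalised gradient--gradient covariance on $\Lambda_N$ and $\widetilde G_N$ the associated Green's function, and (ii) an error term from the remainder factor.

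For (i) I would use that, along the subsequence already fixed in Theorem~\ref{Scaling limit}, the renormalised coefficients $q^{(N)}_{ij}$ tend to the constants $\bar q_{ij}$, uniformly over the $L^{\alpha N}$ sites carrying the supports of $J^N_a,J^N_b$ (this uniformity is exactly what the finite-range renormalisation structure provides). Since $\alpha<1$ the window $L^{\alpha N}$ is macroscopically negligible on the torus, so the periodic images drop out in the limit by the decay of the lattice Green's function; with the coefficients frozen at $\bar q$, a standard lattice-to-continuum estimate (Fourier analysis / local limit theorem) then gives that $\widetilde G_N$, tested against the smooth compactly supported $\div^* J_a,\div^* J_b$, converges to the continuum Green's function $\mC=\mA^{-1}$ on $\R^d$. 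The endpoint $\alpha=1$ is still fine since $J_a,J_b$ are supported compactly inside a single period, keeping the images at macroscopic distance. Hence (i) $\to(\div^* J_a,\mC\,\div^* J_b)_{L^2(\R^d)}$.

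The main obstacle is (ii): showing $\partial_s\partial_t|_{0}$ of the logarithm of the remainder factor vanishes as $N\to\infty$. I would feed the scaling $L^{-\alpha N d/2}$ of $J^N_z$ into the \cite{AKM15} bounds on the remainder and sum the contributions over the renormalisation scales $k=1,\dots,N$; schematically, the scale-$k$ term is a ``Gaussian size'' factor (suppressed once $L^k$ exceeds the window $L^{\alpha N}$) times a ``perturbative size'' factor (contracting in $k$), and balancing these against the normalisation $L^{-\alpha N d}$ is what forces the averaging scale to be a large enough power of the system size and fixes the threshold $\alpha_0<1$. Extracting this scale-by-scale control of the non-Gaussian corrections to a second moment --- finer than the convergence of the full Laplace transform used for Theorem~\ref{Scaling limit} --- is the crux of the proof. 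Once $\alpha\ge\alpha_0$ the error is $o(1)$, and with (i) this gives the claim; working throughout along the subsequence of Theorem~\ref{Scaling limit} accounts for the subsequential statement.
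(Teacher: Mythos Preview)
Your overall architecture (covariance as $\partial_s\partial_t$ of a log-Laplace transform, then split into a Gaussian main term and a remainder) matches the paper. However, two points deviate from, or miss, the actual mechanism.

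First, the Vitali/analytic-continuation step is superfluous. After the Gaussian shift (``key calculation'' \eqref{key calculation}) and the representation \eqref{representation AKM}, the $(s,t)$-dependence is explicit:
\[
\ln\Big[e^{\frac12(sg_a^N+tg_b^N,\,\mC^q(sg_a^N+tg_b^N))}\,F\ast\mu^q(-s\mC^q g_a^N-t\mC^q g_b^N)\Big],
\]
with $g_z^N=\sum_l\nabla_l^*(J_z^N)_l$. Differentiating at $s=t=0$ is an exact computation; no uniform-in-$N$ analyticity argument is needed. Also note that $q=q^{(N)}$ is a single constant symmetric matrix (the fine-tuning parameter), not a field over sites, so your ``uniformity over the $L^{\alpha N}$ sites'' is not the right picture.

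Second, and more importantly, your plan for the error term (ii) misidentifies the mechanism that produces $\alpha_0$. There is no scale-by-scale summation over $k=1,\dots,N$ left to do: the renormalisation group has already been run inside \cite{AKM15}, and the only output you use is the last-scale object $K_N$ with $\|K_N\|_{N,\Lambda_N}\le C\eta^N$. The error terms are the first and second Fr\'echet derivatives of $K_N$ in the directions $\mC^q g_a^N,\mC^q g_b^N$, and Lemma~\ref{derivative estimate} bounds them by $\eta^N$ times powers of $|\mC^q g_z^N|_{N,\Lambda_N}$. The whole point is that this \emph{last-scale} field norm carries weights $L^{N((d-2)/2+s)}$ calibrated to the system size $L^N$, while the test function is smoothed at scale $L^{\alpha N}$; the mismatch gives (Lemma~\ref{scope alpha})
\[
|\mC^q g_z^N|_{N,\Lambda_N}\le C\,\tau(\alpha)^N,\qquad \tau(\alpha)=L^{(1-\alpha)(\frac d2+4)},
\]
and $\alpha_0$ is defined by $\eta\,\tau(\alpha_0)^2<1$. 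Your ``Gaussian size at scale $k$ vs.\ perturbative contraction at scale $k$'' heuristic does not correspond to what actually has to be estimated; without the explicit computation of $|\mC^q g_z^N|_{N,\Lambda_N}$ in the scale-$N$ norm, you do not get a concrete $\alpha_0$ and the argument remains a sketch.

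For the Gaussian part (i), the paper does not use Fourier analysis or a local limit theorem; it passes to the rescaled lattice $\Lambda_N'=\Lambda_N/L^{\alpha N}$ embedded in the growing torus of side $L^{(1-\alpha)N}$, obtains a uniform $l^2$ bound on the discrete gradient of the solution, and takes a weak $L^2(\R^d)$ limit (Proposition~\ref{convergence whole space}). Your Fourier route could in principle be made to work, but it is a genuinely different argument and would require its own care with the $N$-dependence of $q^{(N)}$.
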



\section{Outline of the Proofs}

The proofs of Theorem \ref{Scaling limit} and \ref{Smoothed covariance} 
rely on a representation for the finite volume Gibbs measure constructed in \cite{AKM15}. There, the measure $\nu_{H_N}$ is written as perturbation of a Gaussian measure $\mu^q$, i.e., $\nu_{H_N} = F_0^q(\varphi) \mu^q$. In \cite{AKM13} $\mu^q$ is decomposed into Gaussian measures $\mu^q_1, \ldots, \mu^q_{N+1}$ with a suitable finite range property such that $\mu^q = \mu_1^q \ast \ldots \mu_{N+1}^q$. This yields
\begin{align}
\int F_0^q(\varphi) \mu^q (\de \varphi)
&= \int F_0^q(\varphi) \, \mu_1^q \ast \ldots \ast \mu_{N+1}^q (\de \varphi)
= \int F^q_1(\varphi) \, \mu_2^q \ast \ldots \ast \mu_{N+1}^q (\de \varphi)
\nonumber \\
&= \ldots
= \int F^q_N(\varphi) \, \mu_{N+1}^q (\de \varphi). \label{iterative integration}
\end{align}
It can be shown that  the effective interactions $F^q_k$ are expressed as the composition of a 'relevant' term $e^{-H_k^q(\varphi)}$, 
where $H_k^q(\varphi)$ is quadratic and local in $\nabla \varphi$,  and an 'irrelevant term' $K_k^q(\varphi)$ such that the
map $K_k \mapsto K_{k+1}$ is a contraction in suitable norms. From this contraction property one can deduce a Stable
Manifold Theorem for the evolution of the $(H_k^q, K_k^q)$ variables.
This gives the possibility to choose $q$ such that the initial perturbation $F^q_0$ lies on the stable manifold.
Hence after $N$ steps a nice representation, which corresponds to $H^q_N=0$, is obtained.

~\\
In the following we first give a precise statement of corresponding result in \cite{AKM15} and collect useful consequences. Then we give a sketch of the proof of Theorem \ref{Scaling limit} and Theorem~\ref{Smoothed covariance}.
In the next Section we provide the detailed proofs.


\subsection{Representation for $\nu_{H_N}$}

To state the result in \cite{AKM15} we have to introduce some objects. First of all we rewrite the Gibbs measure $\nu_{H_N}(\de \varphi)$ as perturbation of a Gaussian measure $\mu^{q}$, $q \in \R^{d \times d}_{\text{sym}}$, with covariance $\mC^q = (\mA^q)^{-1}$, where
\begin{align}
\mA^q = \sum_{i,j} (\delta_{ij} + q_{ij}) \nabla_j^* \nabla_i.
\end{align}
For this recall that $U$ is the second order Taylor remainder of $V$, i.e.,
\begin{align}
U(s,t) = V(s+t) - V(t) - V'(t)s \label{U}
\end{align}
and insert artificially the so called fine-tuning parameters $\frac{1}{2}(q \nabla \varphi, \nabla \varphi)$ and $\lambda^q \in \R$
to get
\begin{align}
e^{-H_N(\varphi)} \,\de \lambda_N(\varphi)
= \kappa \, F(\varphi) \, \mu^q(\de \varphi)
\label{trafo real Z}
\end{align}
where $\kappa = e^{-\frac{1}{2} L^{Nd} \left| u \right|^2} e^{-L^{Nd} \sum_i V(u_i)} e^{-\lambda^q L^{Nd}} Z_N^q$ 
and
\begin{align}
F(\varphi) = e^{\frac{1}{2} \left( q \nabla \varphi, \nabla \varphi \right) + \lambda^q L^{Nd}} e^{- \sum_x \sum_i U \left( \nabla_i \varphi (x), u_i \right)}.
\end{align}
Here $Z_N^q$ is the partition function of the measure $\mu^q$.
Then
\begin{align}
\nu_{H_N} (\de \varphi)
= F(\varphi) \mu^q(\de \varphi) \frac{1}{\int F(\varphi) \mu^q(\de \varphi)}.
\label{trafo Z}
\end{align}

Furthermore we need the norm on the irrelevant part used in the last integration step in~\eqref{iterative integration}. In the following several constants will appear which are needed for the construction in \cite{AKM15} but  they will not be explained or motivated here.

First we define a norm on fields $\varphi \in \R^{\Lambda_N}$ by
\begin{align}
\left| \varphi \right|_{N,\Lambda_N} = \underset{1\leq s\leq3}{\text{max}} ~ \underset{x \in \Lambda_N}{\sup} \frac{1}{h} ~ L^{N\left( \frac{d-2}{2} + s \right)} \left| \nabla^s \varphi (x) \right|
\end{align}
where $\left| \nabla^s \varphi (x) \right|^2 = \sum_{|\alpha|=s} \left| \nabla^{\alpha} \varphi(x) \right|^2$, 
$\alpha$  is a multiindex and $h > 0$.
We introduce the quantities
\begin{align}
G_{N,x}(\varphi)
&= \frac{1}{h^2} \left(  \left| \nabla \varphi(x) \right|^2 + L^{2N}  \left| \nabla^2 \varphi(x) \right|^2 + L^{4N}  \left| \nabla^3 \varphi(x) \right|^2 \right) \label{norm G}
\\
\mathrm{and} \quad
g_{N,x}(\varphi)
&= \frac{1}{h^2}  \sum_{s=2}^4 L^{(2s-2)N} \sup_{y \in \Lambda_N} \left| \nabla^s \varphi(y) \right|^2 \label{norm g}
\end{align}
which are used to define the so-called large field regulator
\begin{align}
w_N^{\Lambda_N} (\varphi) = e^{\sum_{x \in \Lambda_N} \omega (2^d g_{N,x}(\varphi) + G_{N,x}(\varphi))}, \quad \omega \in \R.
\label{large field regulator}
\end{align}

Next we determine a seminorm which controls the Taylor remainder of a function $F$ on fields,
\begin{align}
\left| F(\varphi) \right|^{N,\Lambda_N}
= \sum_{s=0}^{r_0} \frac{1}{s!} \sup_{|\bar{\varphi}|_{N,\Lambda_N} \leq 1} \left|D^s F(\varphi)(\bar{\varphi}, \ldots, \bar{\varphi})\right|
\end{align}
for some $r_0 >0$.
Finally we define the norm
\begin{align}
\Vert F  \Vert_{N, \Lambda_N}
= \sup_{\varphi} \left| F(\varphi) \right|^{N,\Lambda_N} w_N^{-\Lambda_N} (\varphi).
\end{align}
Note that, in comparison to \cite{AKM15}, we skip any dependencies of maps on subsets $X \subset \Lambda_N$ since we do not need it here.

~\\
Let
$$
\mu^q = \mu^q_1 \ast \ldots \ast \mu^q_{N+1}
$$
be a decomposition of $\mu^q$ into Gaussian measures with range on increasing blocks, see \cite{AKM13} for an exact definition and for existence of such a decomposition whenever $q$ is small enough, i.e., $\Vert q \Vert \leq \frac{1}{2}$ where the norm $\left\| q \right\|$ is the operator norm of $q$ viewed as operator on $\R^d$ equipped with the metric $|q|_{l_2} = \left( \sum_i |q_i|^2 \right)^{\frac{1}{2}}$.

~\\
In the following Proposition we use the notation $$F \ast \mu (\xi) = \int F(\varphi + \xi) \mu(\de \varphi).$$


\begin{prop}\label{resultAKM}
There exist positive constants $\rho$, $\rho_1 \leq \frac{1}{2}$, $\zeta$ and $\eta \in (0,1)$, $\eta$ independent on $N$, such that for suitable chosen constants $L$, $h$, $\omega$ and $r_0$ and for any $\mathcal K$ with $\Vert \mathcal K \Vert_{\zeta} \leq \rho$ there is a parameter $q = q(\mathcal K,N)$ with $\Vert q \Vert \leq \rho_1$ satisfying
\begin{align} F \ast \mu^q (\xi)
= 1 + \int K_N (\varphi + \xi) \,\mu_{N+1}^q (\de \varphi) \label{representation AKM}
\end{align}
such that
\begin{align}
\Vert K_N \Vert_{N, \Lambda_N} \leq C \, \eta^N. \label{decay norm AKM}
\end{align}
\end{prop}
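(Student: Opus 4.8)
The plan is to obtain Proposition \ref{resultAKM} as the endpoint of the renormalization group (RG) flow indicated around \eqref{iterative integration}, combining the finite range decomposition $\mu^q = \mu_1^q \ast \cdots \ast \mu_{N+1}^q$ with a stable manifold argument that pins down the free parameters $q$ and $\lambda^q$. Starting from the representation \eqref{trafo Z}, the integrand $F = F_0^q$ factorizes as a relevant exponential $e^{\frac12(q\nabla\varphi,\nabla\varphi) + \lambda^q L^{Nd}}$ times the product $\prod_x(1 + \mathcal K(\nabla\varphi(x)))$ whose irrelevant content is controlled by $\|\mathcal K\|_\zeta$. I would encode the effective interaction after integrating the first $k$ scales in a pair of coordinates $(H_k^q, K_k^q)$, where $H_k^q$ is the relevant part --- a local quadratic form in $\nabla\varphi$ together with the scalar energy --- and $K_k^q$ is the irrelevant remainder measured in $\|\cdot\|_{N,\Lambda_N}$, with $\|K_0^q\|_{N,\Lambda_N}$ bounded by $\|\mathcal K\|_\zeta$. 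The single RG step
$$
(H_k^q, K_k^q) \longmapsto (H_{k+1}^q, K_{k+1}^q)
$$
arises by convolving $F_k^q$ with $\mu_{k+1}^q$, re-expanding, and extracting the relevant monomials by Taylor expansion at $\varphi = 0$; I would import this map together with its smoothing and locality properties from \cite{AKM13, AKM15}.

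I would then treat this map as a discrete dynamical system near the trivial fixed point $(0,0)$. Linearizing there, the relevant coordinate $H$ lives in a finite-dimensional space on which the linearized map does not contract, whereas the irrelevant coordinate $K$ contracts by a factor $\eta < 1$ uniform in $N$; this spectral gap is exactly what is quantified in \eqref{decay norm AKM}. The substance of the argument is to promote this linear picture to the full nonlinear map. Using the large field regulator $w_N^{\Lambda_N}$ of \eqref{large field regulator} to absorb large-field contributions and the polynomial weights $G_{N,x}$, $g_{N,x}$ to control derivatives, one shows that the nonlinear corrections to $K_{k+1}$ are at least quadratic in $(H_k, K_k)$ and uniformly small once $\|\mathcal K\|_\zeta \le \rho$, so that the flow is a genuine contraction in the $K$ direction while the relevant part can only grow in a controlled way.

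With the contraction in hand, a Stable Manifold Theorem for the flow supplies, for each irrelevant initial datum, a unique choice of the relevant initial data that is attracted to the fixed point. Since the relevant initial data are parametrized by $q$ and $\lambda^q$ through the artificial splitting \eqref{trafo real Z}, I would set up a fixed point equation expressing the requirement that the relevant coordinate be driven to zero along the whole trajectory and solve it for $q = q(\mathcal K, N)$, with $\|q\| \le \rho_1 \le \tfrac12$, by a contraction-mapping argument depending continuously on $\mathcal K$. Tuning $q$ (and $\lambda^q$) in this way forces $H_N^q = 0$, so that after $N$ steps the effective interaction is purely irrelevant, $F_N^q = 1 + K_N$; evaluating \eqref{iterative integration} with an external shift $\xi$ at the last scale then gives the representation \eqref{representation AKM}, and the accumulated contraction yields $\|K_N\|_{N,\Lambda_N} \le C\eta^N$ as in \eqref{decay norm AKM}.

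The principal obstacle is the contraction estimate for the RG step in the norm $\|\cdot\|_{N,\Lambda_N}$, which is where essentially all of the \cite{AKM15} machinery is needed: one must control simultaneously the finite range property of each $\mu_{k+1}^q$ (so the coarse-grained interaction remains local on the next block scale), the growth of the regulator $w_N^{\Lambda_N}$ under convolution, and the derivative losses incurred at each integration, all uniformly in $N$ and in the volume $\Lambda_N$. A secondary but genuine difficulty is to keep the tuned parameter within the range $\|q\| \le \tfrac12$ for which the finite range decomposition of \cite{AKM13} is available; this constraint, propagated back through the fixed point equation, is precisely what dictates the smallness threshold $\rho$ on $\|\mathcal K\|_\zeta$.
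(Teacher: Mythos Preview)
Your proposal is correct and matches the paper's approach: the paper does not give a self-contained proof of this proposition but simply cites \cite{AKM15} (Proposition~4.6 for the choice of $L,h,\omega,r_0$; Theorem~4.9 for the existence of $\rho,\rho_1,\zeta,q$ and the representation~\eqref{representation AKM}; Proposition~8.1 and Subsection~4.5 for the decay~\eqref{decay norm AKM}), and your sketch accurately summarizes the RG/stable-manifold strategy of \cite{AKM15} that the paper itself outlines just before stating the proposition.
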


A choice for $L$, $h$, $\omega$ and $r_0$ is made in \cite{AKM15}, Proposition 4.6. The existence of the constants $\rho$, $\rho_1$ and $\zeta$ and of the parameter $q$ and the formula \eqref{representation AKM} can be found in \cite{AKM15}, Theorem 4.9. The exponential decay of the norm \eqref{decay norm AKM} is a consequence of Proposition 8.1 and of the construction of the corresponding Banachspace in Subsection 4.5, both in \cite{AKM15}.

~\\
From the results in \cite{AKM15} 
one can also deduce the following estimate on maps $F$ on fields.
\begin{lemma} \label{derivative estimate}
For any $s \leq r_0$ it holds
\begin{align*}
&\left| \int D^s F (\varphi + \xi)(\xi_1,\ldots,\xi_s) \mu^q_{N+1} (\de \varphi) \right|
\\
& \qquad \leq
C \left| \xi_1 \right|_{N,\Lambda_N} \ldots \left| \xi_s \right|_{N,\Lambda_N}
\left\| F \right\|_{N,\Lambda_N}
\left( w_N^{\Lambda_N} (\xi) \right)^2.
\end{align*}
\end{lemma}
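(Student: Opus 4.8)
The plan is to reduce the statement to the norm bound on $F$ together with the regularity/contraction properties of the last-step Gaussian measure $\mu^q_{N+1}$ already extracted from \cite{AKM15}. First I would unwind the definitions: by the very definition of $\left|\cdot\right|^{N,\Lambda_N}$, for each fixed configuration $\psi=\varphi+\xi$ one has
\begin{align*}
\frac{1}{s!}\left|D^sF(\psi)(\xi_1,\dots,\xi_s)\right|
\le \left|\xi_1\right|_{N,\Lambda_N}\cdots\left|\xi_s\right|_{N,\Lambda_N}\,
\left|F(\psi)\right|^{N,\Lambda_N},
\end{align*}
since $D^sF$ is $s$-linear and $\left|\cdot\right|^{N,\Lambda_N}$ takes the supremum of $\tfrac1{s!}|D^sF(\psi)(\bar\varphi,\dots,\bar\varphi)|$ over unit vectors $\bar\varphi$ (polarization absorbs the constant into $C$). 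Then $\left|F(\psi)\right|^{N,\Lambda_N}\le \left\|F\right\|_{N,\Lambda_N}\,w_N^{\Lambda_N}(\psi)$ by definition of $\left\|\cdot\right\|_{N,\Lambda_N}$. So after pulling the $s$-linear factors out, the whole problem collapses to controlling
\begin{align*}
\int w_N^{\Lambda_N}(\varphi+\xi)\,\mu^q_{N+1}(\de\varphi)
\end{align*}
by $C\,(w_N^{\Lambda_N}(\xi))^2$.

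For that integral I would use the explicit product form of the large field regulator \eqref{large field regulator}: $w_N^{\Lambda_N}(\varphi+\xi)=\exp\big(\sum_{x}\omega(2^dg_{N,x}(\varphi+\xi)+G_{N,x}(\varphi+\xi))\big)$, and the elementary quadratic-form inequality $g_{N,x}(\varphi+\xi)\le 2g_{N,x}(\varphi)+2g_{N,x}(\xi)$ and likewise for $G_{N,x}$ (both are, up to the $\sup_y$, positive quadratic forms in the discrete derivatives of the field). This splits the integrand into $w_N^{\Lambda_N}(\xi)^2$ times a factor depending only on $\varphi$, namely $\exp\big(\sum_x\omega(2^{d+1}g_{N,x}(\varphi)+2G_{N,x}(\varphi))\big)$. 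The remaining task is the uniform (in $N$) bound
\begin{align*}
\int \exp\Big(\sum_{x\in\Lambda_N}\omega\big(2^{d+1}g_{N,x}(\varphi)+2G_{N,x}(\varphi)\big)\Big)\,\mu^q_{N+1}(\de\varphi)\le C,
\end{align*}
which is a Gaussian integrability statement: $\mu^q_{N+1}$ is the finite-range covering measure at the top scale, whose covariance has the scaling bounds from \cite{AKM13} making the exponents $L^{(2s-2)N}|\nabla^s\varphi|^2$ of order one under $\mu^q_{N+1}$; for $\omega$ chosen small enough (the same smallness required in \cite{AKM15} to make $w_N^{\Lambda_N}$ a legitimate regulator for integration against $\mu^q_{k+1}$), the Gaussian integral converges with an $N$-independent bound. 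Concretely I would cite the relevant estimate from \cite{AKM15} (the one guaranteeing $\int w_N^{\Lambda_N}(\varphi+\xi)\mu^q_{k+1}(\de\varphi)\le 2\,w_N^{\Lambda_N}(\xi)^{c}$, or the analogous bound in their Subsection on integration), specialized to $k=N$.

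The main obstacle is precisely this last Gaussian integrability bound with a constant uniform in $N$: one must know that the parameters $L$, $h$, $\omega$, $r_0$ fixed in \cite{AKM15} (Proposition 4.6) are chosen so that the top-scale measure $\mu^q_{N+1}$ tames the regulator $w_N^{\Lambda_N}$ at all $N$ simultaneously, using the $N$-uniform bounds on the finite-range decomposition from \cite{AKM13} and the smallness $\|q\|\le\rho_1$. Everything else — the $s$-linear extraction, the quadratic splitting of $g$ and $G$, the polarization constant — is routine and folds into the generic $C$. I would therefore state the lemma as a direct consequence of the construction in \cite{AKM15} and spend the written proof only on the regulator-splitting step and the citation for the Gaussian bound.
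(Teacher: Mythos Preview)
Your proposal is correct and follows essentially the same route as the paper: first extract the $s$-linear factors using the definition of $|\cdot|^{N,\Lambda_N}$ and $\|\cdot\|_{N,\Lambda_N}$ (the paper glosses the polarization you make explicit into ``looking carefully at the definition''), reducing everything to the bound $\int w_N^{\Lambda_N}(\varphi+\xi)\,\mu^q_{N+1}(\de\varphi)\le C\,(w_N^{\Lambda_N}(\xi))^2$; then obtain this last estimate from \cite{AKM15}. The only cosmetic difference is that the paper cites this regulator bound directly as an adjustment of Lemma~5.2 in \cite{AKM15} (noting that the case $k=N$ excluded there is actually simpler, with no boundary terms), whereas you sketch the mechanism via the quadratic splitting $g_{N,x}(\varphi+\xi)\le 2g_{N,x}(\varphi)+2g_{N,x}(\xi)$ before invoking the Gaussian integrability --- but this is exactly the content of that adjustment.
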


\begin{proof}
By looking carefully at the definition of the norm $\Vert \cdot \Vert_{N, \Lambda_N}$ 
one  can easily obtain the following estimate:
\begin{align*}
&\left| \int D^s F (\varphi + \xi)(\xi_1,\ldots,\xi_s) \mu^q_{N+1} (\de \varphi) \right|
\\
& \qquad \leq
C \left| \xi_1 \right|_{N,\Lambda_N} \ldots \left| \xi_s \right|_{N,\Lambda_N}
\left\| F \right\|_{N,\Lambda_N}
\int  w_N^{\Lambda_N}(\varphi + \xi)\, \mu^q_{N+1} (\de \varphi).
\end{align*}
An adjustment of the proof of Lemma 5.2 in \cite{AKM15} (there the case $k=N$ is excluded) gives
	$$
	\int w_N^{\Lambda_N}(\varphi + \xi) \, \mu^q_{N+1}(\de\varphi)
	\leq 
	C \left( w_N^{\Lambda_N} (\xi) \right)^2.
	$$
In fact the adjustment is a huge simplification since we do not have to deal with the boundary terms which the authors of \cite{AKM15} have to for the scales $k<N$.
\end{proof}


\subsection{Sketch of the proofs}
In order to apply Proposition \ref{resultAKM} for the computation of the scaling limit and the smoothed covariance we do the following \emph{key calculation}:
By completing the square and linear transformation we get for a Gaussian measure $\mu_C$
\begin{align}
\int e^{-(\varphi,f)} F(\varphi) \mu_C(\de\varphi)
&= \frac{1}{Z} \int e^{-(\varphi,f)} e^{-\frac{1}{2}(\varphi, C^{-1} \varphi)}F(\varphi) \,\de\varphi \nonumber
\\
&= e^{\frac{1}{2}(f,Cf)} \frac{1}{Z} \int e^{-\frac{1}{2}(\varphi + Cf, C^{-1} (\varphi+ C f))}F(\varphi) \,\de\varphi \nonumber
\\
&= e^{\frac{1}{2}(f,Cf)} \int F(\varphi - Cf) \mu_C(\de\varphi). \label{key calculation}
\end{align}


\begin{proof}[Proof of Theorem \ref{Scaling limit}, main ideas]
We compute, using \eqref{trafo Z} and \eqref{key calculation} and denoting\linebreak $g^N~=~\sum_l\nabla_l^* f^N_l$,
\begin{align*}
\int e^{-(\nabla \varphi, f^N)} \nu_{H_N}(\de \varphi)
&= \frac{1}{\int F \, \de \mu^q} \int e^{-(\varphi, g^N)} F(\varphi) \mu^q(\de\varphi)
\\
&= e^{\frac{1}{2}(g^N,\, \mC^q g^N)} \,
\frac{1}{\int F \, \de \mu^q} \, \int F(\varphi - \mC^q g^N) \mu^q(\de \varphi)
\\
&= e^{\frac{1}{2}(g^N,\, \mC^q g^N)} \,
\frac{F \ast \mu^q (-\mC^q \, g^N)}{F \ast \mu^q (0)}.
\end{align*}

Now we apply the representation \eqref{representation AKM} in Proposition \ref{resultAKM}, Lemma \ref{derivative estimate} and the bound \eqref{decay norm AKM} in Proposition \eqref{resultAKM} to see that
\begin{align*}
\left| F \ast \mu^q (0) - 1 \right|
= 	\left|
			\int K_N (\varphi) \mu^q_{N+1} (\de\varphi)
		\right|
\leq C \Vert K_N  \Vert_{N, \Lambda_N} w_N^{\Lambda_N}(0)^2
\leq C \eta^N w_N^{\Lambda_N}(0)^2.
\end{align*}
By the same reasoning,
\begin{align*}
\left| F \ast \mu^q (-\mC^q \, g^N) - 1 \right|
 \leq C \eta^N w_N^{\Lambda_N}(-\mC^q \,g^N)^2.
\end{align*}

In Subsection \ref{errorterms}, Lemma \ref{last scale bound} we show that $w_N^{\Lambda_N}(0)$ and $w_N^{\Lambda_N}(-\mC^q \,g^N)$ can be bounded uniformly in $N$ such that
$$
\frac{ F \ast \mu^q (-\mC^q \,g^N)}{ F \ast \mu^q (0)}\rightarrow 1
\quad \text{as } N \rightarrow \infty.
$$
The convergence
$$
e^{\frac{1}{2}(g^N,\,\mC^q g^N)}
\rightarrow e^{\frac{1}{2}(\div f,\,\mC \div f)_{L^2}}
\quad \text{as } N \rightarrow \infty
$$
is proved in detail in Subsection \ref{convergence}, see Proposition \ref{convergence torus}.
\end{proof}


For the proof of Theorem \ref{Smoothed covariance} note that we can compute the covariance by taking derivatives of the logarithm of a generating functional.
For a measure $\nu = \frac{1}{Z} \, e^{-H} \, \de \lambda$ and random variables $X$ and $Y$ it holds
\begin{align}
\Cov_{\nu} (X,Y)
= \partial_s \partial_t \Big \vert _{s=t=0} \ln \int e^{-(sX+tY)} e^{-H} \, \de \lambda.
\label{cov as der}
\end{align}

\begin{proof}[Proof of Theorem \ref{Smoothed covariance}, main ideas]

We use the representation of the covariance \eqref{cov as der} as follows
$$
\textrm{Cov}_{\nu_{H_N}} ((\nabla \varphi, J_a^N),(\nabla \varphi, J_b^N))
= \partial_s \partial_t \Big \vert _{s=t=0} \ln
\int e^{-(sJ_a^N + tJ_b^N, \nabla \varphi)} e^{-H_N(\varphi)} \,\de \lambda_N(\varphi).
$$
As in the proof of Theorem \ref{Scaling limit} we compute, using \eqref{trafo real Z},  \eqref{representation AKM}
 and \eqref{key calculation} and denoting\linebreak $g^N_z=\sum_l \nabla_l^* J_z^N$ for $z=a,b$,
\begin{align}
& \Cov_{\nu_{H_N}} ((\nabla \varphi, J_a^N),(\nabla \varphi, J_b^N))
\nonumber \\
&= \partial_s \partial_t \Big \vert _{s=t=0} \ln
	\left[ \kappa \,
		e^{\frac{1}{2}\left(s g_a^N + t g_b^N, \,\mC^q \left(s g_a^N + t g_b^N\right)\right)}
		  F \ast \mu^q (- s \mC^q\, g_a^N - t \mC^q\, g_b^N)
	\right]
\nonumber \\
&= \left( g_a^N, \mC^q\, g_b^N \right)
\nonumber \\
& \qquad +	\frac{1}{F \ast \mu^q (0)}
\int D^2 K_N (\varphi)(-\mC^q\, g_a^N, -\mC^q\, g_b^N) \mu^q_{N+1}(\de\varphi)
\label{cov extended}\\
& \qquad -	\frac{1}{F \ast \mu^q (0)^2}
\int D K_N (\varphi)(-\mC^q\, g_a^N) \mu^q_{N+1}(\de\varphi)
\int D K_N (\varphi)(-\mC^q\, g_b^N) \mu^q_{N+1}(\de\varphi).
\nonumber
\end{align}

As in the proof of Theorem \ref{Scaling limit} it holds by the use of Proposition \ref{resultAKM}
$$
\left| F \ast \mu^q (0) - 1 \right|
\leq C \eta^N w_N^{\Lambda_N}(0)^2.
$$
and thus the denominators in \eqref{cov extended} tend to $1$ as $N$ tends to infinity.
Employing Lemma~\ref{derivative estimate} and \eqref{decay norm AKM} in Proposition \ref{resultAKM} we further get
\begin{align*}
\left|
	\int D^2 K_N (\varphi)(-\mC^q\, g_a^N, -\mC^q\, g_b^N) \mu^q_{N+1}(\de\varphi)
\right|
\leq C \eta^N
\left| \mC^q\, g_a^N \right|_{N, \Lambda_N}
\left| \mC^q\, g_b^N \right|_{N, \Lambda_N}
\end{align*}
and for $z = a,b$
\begin{align*}
\left|
	\int D K_N (\varphi)(-\mC^q\, g_z^N) \mu^q_{N+1}(\de\varphi)
\right|
& \leq C \eta^N
\left| \mC^q\, g_z^N\right|_{N, \Lambda_N}.
\end{align*}

Hence it remains to show
\begin{align}  \label{eq:error_terms_2}
\eta^N
\left| \mC^q\, g_a^N \right|_{N, \Lambda_N}
\left| \mC^q\, g_b^N \right|_{N, \Lambda_N}
\rightarrow 0 \quad \textrm{as } N \rightarrow \infty.
\end{align}

The exponential decay $\eta^N$ allows us to consider values $\alpha \leq 1$. 
In Lemma \ref{scope alpha} we show that there exist  a decreasing function $\tau$ satisfying $\tau (1) = 1$ such that
$$
\left| \mC^q \, g_z^N \right|_{N, \Lambda_N}
\leq C \, \tau(\alpha)^N.
$$
This gives the possibility to choose $\alpha_0 < 1$ as small as possible such that $\eta \tau(\alpha_0)^2 < 1$ and hence $(\eta \tau(\alpha)^2) ^N \rightarrow 0$ as $N$ tends to infinity for all $\alpha \in \left[\alpha_0,1\right]$.

Then it only remains to show
$$
\left( g_a^N, \mC^q \,  g_b^N \right)
\rightarrow
\left( \div J_a, \mC \, \div J_b \right)_{L^2} .
$$
This is done in Subsection \ref{convergence}, Proposition \ref{convergence whole space}.
\end{proof}


\begin{rem}
For higher moments (say of order $s$) one has to choose $\alpha_0$ such that
$$
\eta \tau(\alpha_0)^s < 1
$$
which implies $\alpha_0 > 1- \frac{-\ln \eta}{s(\frac{d}{2} + 4)}$ (see the formula for $\tau(\alpha)$
in Lemma \ref{scope alpha}) so $\alpha_0 \rightarrow 1$ as $s \rightarrow \infty$.
Estimates of correlation functions can be done in more generality by inserting external fields into the partition function and extending the flow of $(H_k^q, K_k^q)$ to these observable variables, compare \cite{BBS14}. This amounts to extending norm estimates in \cite{AKM15} to the case of included variables and computing explicitly the flow of the observables. We plan to pursue this in future work.
\end{rem}

\section{Details of the Proofs}

\subsection{Scaled discrete setting}

For the proof of the convergence of $\mC^q$ to $\mC$ we switch to a scaled setting. Let $0 < \alpha \leq 1$. Define
$$
\mathcal A_{N} = \sum_{i,j} a^N_{ij} \, D_{N,j}^* D_{N,i}
 \quad \text{with } a^N_{ij} = \delta_{ij} + q^N_{ij}
$$
where
$$
D_{N,j} \varphi (x)
= \frac{\varphi(x + L^{-\alpha N} e_j ) - \varphi(x)}{L^{-\alpha N}}
\quad \text{and} \quad
D_{N,j}^* \varphi (x)
= \frac{\varphi(x - L^{-\alpha N} e_j ) - \varphi(x)}{L^{-\alpha N}}.
$$
We will also use
\begin{align}
\div_N^* h(x) = \sum_l D_{N,l}^* h_l(x) \quad \text{for} \quad h: \Lambda_N \rightarrow \R^d.
\label{discrete divergence}
\end{align}
Further define the scaled discrete torus
$$
\Lambda_N' = \Lambda_N/L^{\alpha N}
$$
of spacing $L^{\alpha N}$ with fundamental domain embedded into $\left[ -\frac{L^{(1-\alpha)N}}{2}, \frac{L^{(1-\alpha)N}}{2} \right]^d$. The corresponding torus in continuum is thus $\T^d_{R_N}$ for $R_N = L^{(1-\alpha)N}$.

~\\
For maps $\varphi, \psi: \Lambda_N' \rightarrow \R$ define
\begin{align*}
\left\langle \varphi, \psi \right\rangle_{l^2}
&\defeq \,
L^{-\alpha Nd} \sum_{x \in \Lambda_N'} \varphi(x) \psi(x),
\\
\left\langle \varphi, \psi \right\rangle_{w^{1,2} }
&\defeq  \,
\left\langle \varphi, \psi \right\rangle_{l^2 }
+
L^{-\alpha Nd} \sum_{x \in \Lambda_N' } \sum_{k=1}^d D_{N,k}\varphi(x) D_{N,k}\psi(x).
\end{align*}

Further, let
$$
\chi_N' = \left\{ \varphi: \Lambda'_N \rightarrow \R, \sum_{x \in \Lambda'_N} \varphi(x) = 0 \right\}
$$
which becomes a Hilbert space when endowed with the scalar product $\left\langle \cdot, \cdot \right\rangle_{w^{1,2}}$.

~\\
For convenience of the reader we include a proof of the following standard result.
\begin{prop}\label{proposition discrete tori}
For $g: \Lambda_N' \rightarrow \R$ there is a unique weak solution $u_N \in \chi_N'$ to
$$
\mA_N u_N = D_{N,l}^* g \quad\textrm{in } \Lambda_N'.
$$
This solution satisfies
$$
\left\| u_N \right\|_{w^{1,2}}
\leq C \, L^{2N(1- \alpha)} \left\| g \right\|_{l^2}.
$$
Moreover, there is a constant independent of $N$ such that
$$
\left\| D_N u_N \right\|_{l^{2}}
\leq C \left\| g \right\|_{l^2}.
$$
\end{prop}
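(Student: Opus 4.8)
The plan is to establish this as a standard application of the Lax--Milgram theorem on the finite-dimensional Hilbert space $\chi_N'$, followed by energy estimates. The operator $\mA_N$ arises from the bilinear form
\[
B_N(\varphi,\psi) = L^{-\alpha Nd}\sum_{x\in\Lambda_N'}\sum_{i,j} a^N_{ij}\, D_{N,i}\varphi(x)\, D_{N,j}\psi(x),
\]
and the weak formulation of $\mA_N u_N = D_{N,l}^* g$ reads $B_N(u_N,\psi) = -\langle g, D_{N,l}\psi\rangle_{l^2}$ for all $\psi\in\chi_N'$ (after summation by parts, using that $D_{N,j}^*$ is the adjoint of $-D_{N,j}$ in $\langle\cdot,\cdot\rangle_{l^2}$). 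Here I would first record continuity of $B_N$: since $\|q\|\le\rho_1\le\frac12$, one has $|B_N(\varphi,\psi)|\le C\,\|D_N\varphi\|_{l^2}\|D_N\psi\|_{l^2}\le C\|\varphi\|_{w^{1,2}}\|\psi\|_{w^{1,2}}$, and similarly the right-hand side is a bounded linear functional on $\chi_N'$ because $\|D_{N,l}\psi\|_{l^2}\le\|\psi\|_{w^{1,2}}$.

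The second step is coercivity. Because $q$ is symmetric with $\|q\|\le\frac12$, the matrix $a^N = I + q$ satisfies $a^N\ge\frac12 I$, so $B_N(\varphi,\varphi)\ge\frac12 L^{-\alpha Nd}\sum_x\sum_k |D_{N,k}\varphi(x)|^2 = \frac12\big(\|\varphi\|_{w^{1,2}}^2 - \|\varphi\|_{l^2}^2\big)$. To turn the gradient control into a full $w^{1,2}$-bound I would invoke a discrete Poincar\'e inequality on the torus $\Lambda_N'$ for mean-zero functions: $\|\varphi\|_{l^2}^2 \le C\, R_N^2\, \|D_N\varphi\|_{l^2}^2$ with $R_N = L^{(1-\alpha)N}$ the diameter, which holds because $\varphi\in\chi_N'$ has zero average (this is where the restriction to $\chi_N'$ is essential — the zero eigenvalue of the Laplacian is removed). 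Combining, $B_N(\varphi,\varphi)\ge c\,R_N^{-2}\|\varphi\|_{w^{1,2}}^2$. Lax--Milgram then yields a unique $u_N\in\chi_N'$; note the coercivity constant degrades like $R_N^{-2}$, which is exactly what produces the factor $L^{2N(1-\alpha)}$ in the first asserted bound. Applying the estimate $\|u_N\|_{w^{1,2}}\le C R_N^2 \cdot \sup_{\psi}\frac{|\langle g,D_{N,l}\psi\rangle_{l^2}|}{\|\psi\|_{w^{1,2}}}\le C R_N^2\|g\|_{l^2}$ gives the first inequality.

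For the sharper bound $\|D_N u_N\|_{l^2}\le C\|g\|_{l^2}$ with $C$ independent of $N$, I would not go through the Poincar\'e inequality but instead test the weak equation directly with $\psi = u_N$: coercivity of $a^N$ gives $\tfrac12\|D_N u_N\|_{l^2}^2 \le B_N(u_N,u_N) = -\langle g, D_{N,l}u_N\rangle_{l^2} \le \|g\|_{l^2}\|D_{N,l}u_N\|_{l^2}\le\|g\|_{l^2}\|D_N u_N\|_{l^2}$, and dividing by $\|D_N u_N\|_{l^2}$ yields $\|D_N u_N\|_{l^2}\le 2\|g\|_{l^2}$. This is the cleanest part. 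The main obstacle, such as it is, is bookkeeping rather than depth: one must be careful that the discrete difference operators $D_{N,j}$, $D_{N,j}^*$ are genuine adjoints on the torus (no boundary terms, which is fine since $\Lambda_N'$ is periodic), that the summation-by-parts identity is applied with the correct signs, and that the $R_N$-dependence in the Poincar\'e constant is tracked correctly so the stated power $L^{2N(1-\alpha)}$ comes out. The uniform-in-$N$ second estimate is the one that actually matters downstream (it is what feeds into controlling $|\mC^q g_z^N|_{N,\Lambda_N}$ in Lemma~\ref{scope alpha}), and it is obtained without any Poincar\'e input, so the degradation in the first bound is harmless.
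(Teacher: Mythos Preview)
Your proposal is correct and follows essentially the same route as the paper: Lax--Milgram on $\chi_N'$ with the discrete Poincar\'e inequality supplying the (degrading) coercivity constant for the $w^{1,2}$ bound, and then the direct energy test $\psi=u_N$ for the $N$-uniform gradient bound. One small bookkeeping slip: with the paper's definition $D_{N,j}^*\varphi(x)=\big(\varphi(x-L^{-\alpha N}e_j)-\varphi(x)\big)/L^{-\alpha N}$, summation by parts on the torus gives $\langle D_{N,j}^*\varphi,\psi\rangle_{l^2}=\langle\varphi,D_{N,j}\psi\rangle_{l^2}$, so $D_{N,j}^*$ is the adjoint of $D_{N,j}$ (no minus sign), and the weak formulation reads $B_N(u_N,\psi)=+\langle g,D_{N,l}\psi\rangle_{l^2}$; this does not affect any of your estimates. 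A cosmetic difference is that the paper bounds the ellipticity constant from below via $\lambda_{\min}^N\to\lambda_{\min}(\bar q)>0$, whereas you use the cleaner observation $a^N=I+q\ge\tfrac12 I$ from $\|q\|\le\tfrac12$; both work.
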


\begin{proof}
This works as in the continuum. Define
$$
\mL_N (\varphi,\psi) \defeq
L^{-\alpha Nd} \sum_{x \in \Lambda_N'}
\sum_{i,j} a^N_{ij}  D_{N,i} \varphi (x) D_{N,j} \psi(x)
$$
and
$$
F(\psi) \defeq
L^{-\alpha Nd} \sum_{x \in \Lambda_N'}
g(x) D_{N,l} \psi(x).
$$
In $\chi_N'$ a Poincar\'{e} inequality holds (can be found, e.g., in \cite{BGM04}, Lemma B.2) and thus $\mL_N$ is a continuous coercive bilinear form. Indeed, coercivity follows from
\begin{align}
\mL_N (\psi,\psi)
\geq \lambda_{\min}^N \sum_k \left\| D_{N,k} \psi \right\|_{l^2}^2,
\label{mL}
\end{align}
where we denote by $\lambda^N_{\min}$ the smallest eigenvalue of $a_{ij}^N$.
By the Poincar\'{e} inequality it holds
$$
\left\| \psi \right\|_{w^{1,2}}^2
= \left\| \psi \right\|_{l^2}^2 + \sum_k \left\| D_{N,k} \psi \right\|_{l^2}^2
\leq (1 + C(\Lambda_N')) \sum_k \left\| D_{N,k} \psi \right\|_{l^2}^2.
$$
where $C(\Lambda_N') = C_d \, L^{2N(1-\alpha)}$.
Hence
$$
\mL_N (\psi,\psi)
\geq \frac{\lambda_{\min}^N}{1+C(\Lambda_N')} \left\| \psi \right\|_{w^{1,2}}^2.
$$
Further, $F$ is an element of the dual space by the estimate
\begin{align}
\left|F(\psi)\right|
\leq \left\| g \right\|_{l^2} \sum_k \left\| D_{N,k} \psi \right\|_{l^2}
\leq \left\| g \right\|_{l^2} \left\| \psi \right\|_{w^{1,2}}.
\label{F}
\end{align}
Thus the Lax-Milgram Theorem provides a unique solution $u_N \in \chi_N'$ of $\mL_N (u_N,\psi) = F(\psi)$ for all $\psi \in \chi_N'$
together with the estimate
$$
\left\| u_N \right\|_{w^{1,2}}
\leq \frac{1+C(\Lambda_N')}{\lambda_{\min}^N} \left\| F \right\|_{(\chi_N')^*}
\leq \frac{1+C}{\lambda_{\min}^N} L^{2(1-\alpha)N} \left\| g \right\|_{l^2}.
$$
For the estimate which is independent on $N$ we use \eqref{F} with $\psi = u_N$ and \eqref{mL} to obtain
\begin{align*}
\left\| D_N u_N \right\|_{l^{2}}^2
&= \sum_k \left\| D_{N,k} u_N \right\|_{l^{2}}^2
\leq \frac{1}{\lambda_{\min}^N} \left| \mL_N (u_N, u_N) \right|
= \frac{1}{\lambda_{\min}^N}\left|F(u_N)\right|
\\
&\leq \frac{1}{\lambda_{\min}^N}\left\| g \right\|_{l^2} \sum_k \left\| D_{N,k} u_N \right\|_{l^2}.
\end{align*}
Finally note that by the convergence of $q^N$ to $\bar{q}$ (see Lemma \ref{convergence of q}) $\lambda_{\min}^N$ can be bounded from below by $\lambda_{\min} - \epsilon > 0$, $\lambda_{\min}$ being the minimal eigenvalue of $(a_{ij})$.
\end{proof}


\begin{cor}\label{discrete regularity}
Let $u_N$ be the solution in Proposition \ref{proposition discrete tori} and $g \in \C$. Then
$$
\sup_{x \in \Lambda_N'} \left| u_N(x) \right|
\leq C L^{2(1-\alpha)N}
\left\| g \right\|_{C^{d}}.
$$
\end{cor}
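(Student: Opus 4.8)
The plan is to bootstrap the $w^{1,2}$-bound from Proposition \ref{proposition discrete tori} up to an $L^\infty$-bound using a discrete Sobolev embedding, paying careful attention to how the constants scale with $N$. The starting point is the estimate
$\left\| u_N \right\|_{w^{1,2}} \leq C\, L^{2N(1-\alpha)} \left\| g \right\|_{l^2}$,
together with the dimension-free bound $\left\| D_N u_N \right\|_{l^2} \leq C \left\| g \right\|_{l^2}$. What I would need is a discrete Sobolev inequality on the scaled torus $\Lambda_N'$ (equivalently $\T^d_{R_N}$ with $R_N = L^{(1-\alpha)N}$) controlling $\sup_x |u_N(x)|$ by a $w^{k,2}$-type norm, applied to high enough difference quotients $D_N^\beta u_N$ that the embedding index exceeds $d/2$. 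In the continuum one has $\|u\|_{L^\infty} \le C \|u\|_{H^s}$ for $s > d/2$; the discrete analogue holds on $\T^d_{R_N}$ with a constant that, after tracking the scaling, contributes only polynomially in $R_N$ (the lattice spacing here is $L^{\alpha N}$, and one works with the rescaled variable so that the spacing is $1$ and the torus has side $R_N$).

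The key steps, in order: (i) differentiate the equation $\mA_N u_N = D_{N,l}^* g$ by applying discrete difference operators $D_{N,j}$, using translation invariance of $\mA_N$ on the torus, to obtain $\mA_N (D_{N}^\beta u_N) = D_{N,l}^* (D_N^\beta g)$ for any multiindex $\beta$; (ii) apply the $w^{1,2}$-estimate of Proposition \ref{proposition discrete tori} to each such differentiated equation, yielding $\left\| D_N^\beta u_N \right\|_{w^{1,2}} \leq C L^{2N(1-\alpha)} \left\| D_N^\beta g \right\|_{l^2}$, and note $\left\| D_N^\beta g \right\|_{l^2} \leq \left\| g \right\|_{C^{|\beta|}}$ since $g \in \C$ is smooth and the volume of $\Lambda_N'$ in the $l^2$ normalization is order one; (iii) collect these over $|\beta| \le \lceil d/2 \rceil$ to bound a discrete $w^{\lceil d/2\rceil + 1, 2}$-norm of $u_N$ by $C L^{2N(1-\alpha)} \left\| g \right\|_{C^{d}}$ (using $\lceil d/2\rceil + 1 \le d$ for the relevant dimensions, or just writing $C^d$ as a safe upper bound); (iv) invoke the discrete Sobolev embedding $\sup_x |u_N(x)| \le C(R_N) \|u_N\|_{w^{\lceil d/2\rceil+1,2}}$ on $\T^d_{R_N}$, and check that $C(R_N)$ grows at most polynomially — in fact the dominant $L^{2(1-\alpha)N}$ factor already appearing makes the final bound $\sup_x|u_N(x)| \le C L^{2(1-\alpha)N} \|g\|_{C^d}$, absorbing lower-order powers of $R_N$ into the constant for $\alpha < 1$.

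The main obstacle I anticipate is step (iv): making the $N$-dependence of the discrete Sobolev constant explicit and confirming it does not beat the claimed $L^{2(1-\alpha)N}$. One clean route is to avoid a sharp Sobolev constant entirely: write $u_N$ via its (discrete) Fourier series on $\T^d_{R_N}$, bound $\sup_x|u_N(x)| \le \sum_{k} |\hat u_N(k)|$, and apply Cauchy--Schwarz against $\sum_k (1+|k|^2)^{-s}$ with $s > d/2$ — this latter sum converges to a constant independent of $R_N$ (it is dominated by the full-lattice sum), while $\sum_k (1+|k|^2)^s |\hat u_N(k)|^2$ is exactly a discrete $w^{2s,2}$-norm controlled by steps (i)--(iii). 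An alternative, entirely real-variable, is to iterate a discrete Gagliardo--Nirenberg/Morrey inequality on the unit-spacing torus of side $R_N$, for which the constant scales like a fixed power of $R_N$; either way the polynomial-in-$R_N$ loss is harmless. A minor secondary point is verifying that $\left\| D_N^\beta g \right\|_{l^2} \lesssim \|g\|_{C^{|\beta|}}$ uniformly in $N$, which follows because $D_N^\beta$ is a bounded-coefficient finite-difference approximation of $\partial^\beta$ and the $l^2$ average is over a set of normalized volume $\le |\T^d|$-ish; one should state this explicitly since it is where the smoothness hypothesis $g\in\C$ (as opposed to merely $g\in L^2$) is used.
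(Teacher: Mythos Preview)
Your approach is essentially the paper's: differentiate the equation to get $\mA_N D_N^s u_N = D_{N,l}^* D_N^s g$, apply Proposition~\ref{proposition discrete tori} to each $D_N^s u_N$, and conclude via a discrete Sobolev embedding. The paper goes up to $|s|\le d$ (this is why $\|g\|_{C^d}$ appears) and simply quotes the embedding $\sup_x|u_N(x)|\le C_d\sum_{|s|\le d}\|D_N^s u_N\|_{l^2}$ with an $N$-independent constant from \cite{BGM04}, Lemma~B.1, so the extensive worry in your step~(iv) about the growth of the Sobolev constant is unnecessary.
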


\begin{proof}

By discrete differentiation of the strong  form of the equation one obtains \linebreak
$
\mA_N D_N^{s} u_N = D_{N,l}^* D_N^{s}g
$
for any multiindex $s$. We apply Proposition \ref{proposition discrete tori} to get
$$
\left\| D_N^{s} u_N \right\|_{w^{1,2}} \leq C L^{2N(1-\alpha)} \left\| D_N^{s} g \right\|_{l^{2}}.
$$ 
Now we use the discrete 
Sobolev embedding (see, e.g., \cite{BGM04}, Lemma B.1)
to obtain
$$
\sup_x \left| u_N(x) \right|~\leq~C_d \sum_{|s| \leq d} \left\| D_N^{s} u_N \right\|_{l^2}.
$$
This gives the desired estimate.
\end{proof}


We now transform the terms of interest from the discrete unscaled into the discrete scaled setting.

\begin{lemma}\label{Trafo unscaled scaled}\amsHackForListFollowsLemma
\begin{enumerate}
	\item The kernels $C^q_N$ and $C_N$ of the inverses of the original operator $\mathcal A^q$ and the scaled operator $\mathcal A_N$ are related as follows:
$$	
C_N (x') =L^{\alpha N (d-2)} C^q_N(L^{\alpha N}x')
$$
and thus
\begin{align}  \label{eq:scaling_solution}
(\nabla_k C_N^q \nabla^*_l g^N)(x) =(D_{N,l} C_N D^*_{N,l} g)(L^{-\alpha N} x)
\end{align}
for $g^N(x) = L^{-\alpha N \frac{d}{2}} g(L^{-\alpha N} x)$.
	\item Let $f \in \C(Q(z_1)), g \in \C(Q(z_2))$, $z_1, z_2 \in \R^d$, $0 < \alpha \leq 1$ and define, for $x \in \Lambda_N$,
	$$
	f^N(x) = L^{-\alpha N \frac{d}{2}} f(L^{-\alpha N} x)
	\qquad \text{and} \quad
	g^N(x) = L^{-\alpha N \frac{d}{2}} g(L^{-\alpha N} x).
	$$
Then
$$
\left(\nabla_l^* f^N, \mathcal C^q_N \nabla_l^* g^N\right)
=
\left\langle D^*_{N,l} f, \mathcal C_N D^*_{N,l} g \right\rangle_{l^2}.
$$
\end{enumerate}
\end{lemma}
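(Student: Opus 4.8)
The plan is to prove both identities by a direct change-of-variables computation, treating part (1) as the substantive one and part (2) as a nearly immediate consequence. For part (1), the key observation is that the scaled operator $\mathcal A_N$ is, up to the scaling of the lattice, the same operator as $\mathcal A^q$: the finite difference $D_{N,j}$ on $\Lambda_N'$ (spacing $L^{-\alpha N}$ relative to $\Lambda_N$, but unit spacing after the embedding) is obtained from $\nabla_j$ on $\Lambda_N$ by pulling back along $x \mapsto L^{-\alpha N} x$ and dividing by the mesh size $L^{-\alpha N}$. Concretely, if $S: \chi_N \to \chi_N'$ denotes the rescaling map $(S\varphi)(x') = \varphi(L^{\alpha N} x')$, then $D_{N,j}(S\varphi) = L^{\alpha N} S(\nabla_j \varphi)$ and similarly for $D_{N,j}^*$, so $\mathcal A_N S = L^{2\alpha N} S \mathcal A^q$. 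Inverting and reading off kernels against the respective counting measures (the $l^2$ pairing on $\Lambda_N'$ carries the weight $L^{-\alpha N d}$) yields $C_N(x') = L^{\alpha N(d-2)} C_N^q(L^{\alpha N} x')$. The factor $d-2$ decomposes as $+d$ from the measure/Jacobian bookkeeping when inverting and $-2$ from the two powers of $L^{\alpha N}$ in $\mathcal A_N S = L^{2\alpha N} S\mathcal A^q$; I would track these powers explicitly since getting the exponent right is the whole content of the statement.

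Next I would derive \eqref{eq:scaling_solution} from the kernel identity. Writing $(\nabla_k C_N^q \nabla_l^* g^N)(x)$ as a sum over $y \in \Lambda_N$ of $\nabla_k^{(x)} C_N^q(x-y)$ applied to $(\nabla_l^* g^N)(y)$, I substitute the kernel relation and the definition $g^N(y) = L^{-\alpha N d/2} g(L^{-\alpha N} y)$, then change summation variable to $y' = L^{-\alpha N} y \in \Lambda_N'$. The discrete difference $\nabla_l^*$ acting on $g^N$ becomes $L^{-\alpha N \frac d2}$ times $L^{-\alpha N}$ times $D_{N,l}^* g$ evaluated at $y'$ (one power of $L^{-\alpha N}$ from the chain rule, matching the denominator in the definition of $D_{N,l}^*$), the outer $\nabla_k$ similarly produces a $D_{N,l}$ (note the statement as printed has a small index typo, $\nabla_k$ versus $D_{N,l}$, which I would simply follow or silently correct), and the remaining powers of $L^{\alpha N}$ — from the kernel, from the two difference operators, and from the mesh-size factors — must cancel against the $L^{-\alpha N d/2}$ in $g^N$ and the normalization so that only $(D_{N,l} C_N D_{N,l}^* g)(L^{-\alpha N} x)$ survives. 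Again this is pure exponent arithmetic; I would lay out the power of $L^{\alpha N}$ at each stage in a short aligned display.

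Finally, part (2) follows by pairing: $(\nabla_l^* f^N, \mathcal C_N^q \nabla_l^* g^N)$ is a sum over $x \in \Lambda_N$ of $(\nabla_l^* f^N)(x)$ times $(\mathcal C_N^q \nabla_l^* g^N)(x)$. Apply \eqref{eq:scaling_solution} (in the form with $f$ replacing $g$ in the outer slot, using symmetry of the kernel $C_N^q$, or equivalently re-run the substitution directly) to rewrite the second factor, rewrite $(\nabla_l^* f^N)(x)$ as $L^{-\alpha N d/2} L^{-\alpha N}(D_{N,l}^* f)(L^{-\alpha N} x)$, change variable $x \mapsto x' = L^{-\alpha N} x$ so the sum becomes one over $\Lambda_N'$, and collect the powers of $L^{\alpha N}$: the measure factor $L^{-\alpha N d}$ built into $\langle\cdot,\cdot\rangle_{l^2}$ is exactly what is produced by the Jacobian of the substitution together with the two $L^{-\alpha N d/2}$ normalizations, and the remaining mesh factors assemble the $D_{N,l}^*$ operators. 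This gives $\langle D_{N,l}^* f, \mathcal C_N D_{N,l}^* g\rangle_{l^2}$.

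The computation has no genuine obstacle — it is entirely bookkeeping — so the main thing to get right, and the only place an error can creep in, is the careful accounting of the powers of $L^{\alpha N}$: those coming from the rescaling of the lattice (Jacobian $L^{\alpha N d}$), from each difference operator (each $\nabla$ on $\Lambda_N$ versus $D_N$ on $\Lambda_N'$ contributes a factor $L^{\alpha N}$), from the field normalization $L^{-\alpha N d/2}$ in $f^N, g^N$, and from the weight $L^{-\alpha N d}$ in the scaled $l^2$ inner product. I would organize the proof so that these four sources are displayed explicitly and shown to cancel, and I would double-check the exponent $d-2$ in part (1) as the sanity check that everything is consistent (it is the dimension-dependent scaling of the Green's function of a second-order elliptic operator, as expected).
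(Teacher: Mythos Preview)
Your proposal is correct and follows essentially the same approach as the paper. The paper proves part~(1) by fixing an arbitrary right-hand side $w$, solving $\mathcal A^q u = w$ on $\Lambda_N$ and $\mathcal A_N v' = w(L^{\alpha N}\cdot)$ on $\Lambda_N'$, and then using uniqueness to identify $v'(x') = L^{-2\alpha N} u(L^{\alpha N} x')$, from which the kernel identity is read off; this is exactly your conjugation relation $\mathcal A_N S = L^{2\alpha N} S\mathcal A^q$ phrased in terms of solutions rather than operators. For part~(2) the paper simply says ``use the first part of this lemma and insert definitions and scalings'', which is precisely the change-of-variables bookkeeping you outline.
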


\begin{proof}
\begin{enumerate}
	\item Fix any discrete function $w \in \chi_N$ and let $u\in \chi_N$ be a unique solution to $\mathcal A^q u (x) = w(x)$ for $x \in \Lambda_N$ and $v'\in \chi_N'$ be a unique solution of $\mathcal A_N v'(x') = w(L^{\alpha N} x')$ for $x' \in \Lambda_N'$. Then
\begin{align*}
w(x)
&= \mathcal A^q u (x)
= \sum_{i,j} a^N_{ij} \, \nabla^*_j \nabla_i u(x)
= L^{-2\alpha N} \sum_{i,j} a^N_{ij} \, D_{N,j}^* D_{N,i} u(L^{\alpha N} x')
\\
&= L^{-2\alpha N} \mathcal A_N u(L^{\alpha N} x')
\end{align*}
and uniqueness of the solution implies $v'(x') = L^{-2\alpha N} u(L^{\alpha N} x')$.
When writing the solutions in terms of the inverse operator kernels $C^q_N$ and $C_N$, we get
\begin{align*}
v' (x')
&= L^{-\alpha Nd} \sum_{y' \in \Lambda'_N} C_N(x'-y') w(L^{\alpha N}y')
\quad \textrm{and}
\\
L^{-2\alpha N} u(L^{\alpha N}x')
& = L^{-2\alpha N} \sum_{y\in \Lambda_N} C^q_N(L^{\alpha N} x' - y) w(y)
\\
& = L^{-\alpha Nd} \sum_{y' \in \Lambda'_N} L^{\alpha Nd} L^{-2\alpha N}C^q_N(L^{\alpha N}(x'-y')) w(L^{\alpha N}y').
\end{align*}
Hence the claim follows.
\item Use the first part of this lemma and insert definitions and scalings.
\end{enumerate}
\end{proof}


\subsection{Convergence of the operators} \label{convergence}

Since the fine-tuning parameter $q^N$ obtained in Proposition \ref{resultAKM} is uniformly bounded by~$1/2$ we get the following convergence result.
	\begin{lemma}\label{convergence of q}
	There exist $\bar{q} \in \R^{d\times d}_{sym}$ and a subsequence $\left(N_k\right)_k$ such that
	$$
	\left\| q^{N_k} - \bar{q} \right\| \rightarrow 0 \text{ as } N \rightarrow \infty.
	$$
	\end{lemma}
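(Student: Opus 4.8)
The plan is to exploit finite-dimensional compactness, so this will be very short. By Proposition \ref{resultAKM}, for every $N$ the fine-tuning parameter $q^N \defeq q(\mathcal K, N)$ is a symmetric $d\times d$ matrix with $\|q^N\| \le \rho_1 \le \frac{1}{2}$, and --- crucially --- the bound $\rho_1$ does not depend on $N$. Hence the whole sequence $(q^N)_N$ is contained in the set
$$
B \defeq \left\{ q \in \R^{d\times d}_{sym} : \|q\| \le \tfrac{1}{2} \right\},
$$
which is a closed and bounded subset of the finite-dimensional vector space $\R^{d\times d}_{sym}$, hence compact.

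First I would invoke sequential compactness of $B$ (Bolzano--Weierstrass): there is a subsequence $(N_k)_k$ and some $\bar q \in B$ with $q^{N_k} \to \bar q$ as $k \to \infty$, the convergence being with respect to the operator norm $\|\cdot\|$ --- equivalently entrywise, since all norms on $\R^{d\times d}$ are equivalent. Then I would record that $\bar q$ has the properties claimed in the statement: symmetry is preserved under limits, so $\bar q \in \R^{d\times d}_{sym}$, and $\|\bar q\| \le \frac{1}{2}$ because $B$ is closed. In particular $q = \bar q$ is still an admissible parameter for the finite range decomposition $\mu^q = \mu_1^q \ast \cdots \ast \mu_{N+1}^q$, which matters for the later limiting arguments on the operators $\mA^q$.

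There is essentially no obstacle here; the only point worth stressing is that the uniform-in-$N$ bound $\|q^N\| \le \rho_1$ is exactly what Proposition \ref{resultAKM} supplies, and it is precisely this uniformity that makes the compactness extraction possible. The subsequence produced here is the one along which the convergence statements of Theorem \ref{Scaling limit} and Theorem \ref{Smoothed covariance} are formulated, and one could further diagonalize, if needed, to make it compatible with the other subsequences extracted in subsequent lemmas.
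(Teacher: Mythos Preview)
Your argument is correct and matches the paper's approach exactly: the paper does not give a formal proof but simply notes, in the sentence preceding the lemma, that the uniform bound $\|q^N\|\le 1/2$ from Proposition~\ref{resultAKM} yields the convergence, i.e.\ precisely the finite-dimensional compactness you spell out. The additional remarks about $\bar q$ remaining admissible and about diagonalizing subsequences are fine and consistent with how $\bar q$ is used later.
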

	

Define the elliptic differential operator
$$
\mA \defeq \sum_{i,j} a_{ij} \partial_j^* \partial_i
$$
where we use the convention that $\partial_j^* = - \partial_j$. Furthermore let $\div^* \defeq - \div$.
Remark that by the  Lax-Milgram  Theorem there is a (unique up to the addition of constants) solution $u \in W^{1,2}(\T^d)$ such that for $g \in L^2 (\T^d)$  we have $\mA u = \div^* g$ in $\T^d$ and a (unique up to the addition of constants) solution $u \in W^{1,2}_{\loc}(\R^d)$ with $Du \in L^2(\R^d)$ such that for $g \in L^2 (\R^d)$ we have $\mA u = \div^* g$ in $\R^d$.


\begin{prop} \label{convergence torus}
For $f \in L^2(\T^d; \R^d)$ and the corresponding scaled function $f^N$ as defined in \eqref{scaled f} it holds on a subsequence
$$
\left( \sum_l \nabla_l^* f^N_l, \mC^q \sum_l \nabla_l^* f^N_l \right)
\rightarrow \left( f, D \mC \div^* f \right)_{L^2(\T^d;\R^d)}.
$$ 
\end{prop}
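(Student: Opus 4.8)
The plan is to reduce the statement to a convergence of discrete solution operators on the torus towards the continuum one, and to identify the limit via the weak formulation. Concretely, write $f^N_l$ for the scaled components and let $u_N \in \chi_N$ be the unique discrete solution of $\mA^q u_N = \sum_l \nabla_l^* f^N_l$ in $\Lambda_N$, so that the left-hand side equals $(\sum_l \nabla_l^* f^N_l, u_N) = -\sum_l (f^N_l, \nabla_l u_N)$ after discrete summation by parts; the task is to show this converges to $-\sum_l (f_l, \partial_l u)_{L^2(\T^d)} = (f, Du)_{L^2}$ where $\mA u = \div^* f$ in $\T^d$ and $\mC = \mA^{-1}$. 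Since we already used $\alpha = 1$ is allowed, I would use the scaled setting of Lemma \ref{Trafo unscaled scaled} with $\alpha = 1$, so that $\Lambda_N' = \Lambda_N$ rescaled to mesh $L^{-N}$ living in the fixed torus $\T^d$ (here $R_N = 1$), and translate the quantity into $\langle D^*_{N,l} f, \mC_N D^*_{N,l} f\rangle_{l^2}$ as in part (2) of that lemma. The discrete solution $v_N \in \chi_N'$ then solves $\mA_N v_N = \sum_l D^*_{N,l} f$ on the mesh-$L^{-N}$ torus, and by Proposition \ref{proposition discrete tori} (with $\alpha = 1$, so $R_N = 1$ and the bound is $N$-uniform) we get $\|D_N v_N\|_{l^2} \leq C\|f\|_{l^2}$, hence a uniform $W^{1,2}$-type bound after subtracting the mean.

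The core of the argument is then a standard discrete-to-continuum compactness-plus-identification step. First I would note that $q^N \to \bar q$ along a subsequence by Lemma \ref{convergence of q}, so the coefficient matrices $a^N_{ij} \to a_{ij}$ converge and remain uniformly elliptic. Next, using the uniform bound on $\|v_N\|_{W^{1,2}}$ (mean zero), I would extract a further subsequence along which the piecewise-constant (or suitably interpolated) extensions of $v_N$ and of $D_N v_N$ converge weakly in $L^2(\T^d)$, the limit being some $u \in W^{1,2}(\T^d)$ with $\nabla u$ the weak limit of the discrete gradients; the standard fact that discrete gradients of a bounded discrete family converge weakly to the gradient of the limit function is what makes this work, and it uses that test functions $\psi \in C^\infty(\T^d)$ are approximated by their restrictions to the mesh together with $D_{N,k}\psi \to \partial_k \psi$ uniformly. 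Passing to the limit in the discrete weak formulation $\langle a^N_{ij} D_{N,i} v_N, D_{N,j}\psi\rangle_{l^2} = \langle f, D_{N,l}\psi \rangle_{l^2}$ (valid for mesh functions $\psi$) against a fixed smooth $\psi$ identifies $u$ as the unique mean-zero weak solution of $\mA u = \div^* f$ in $\T^d$; uniqueness then promotes the convergence from the subsequence back to (a subsequence of) the original sequence, which is all that is claimed. Finally, convergence of the bilinear quantity itself follows from $\langle f, D_{N,l} v_N \rangle_{l^2} \to (f, \partial_l u)_{L^2}$, using weak convergence of $D_{N,l} v_N$ against the fixed $L^2$ function $f$ (after replacing $f$ by its mesh sampling, whose error is $O(L^{-N}\|f\|_{C^1})$ since $f \in \C$); summing over $l$ gives $(f, Du)_{L^2(\T^d;\R^d)} = (f, D\mC\div^* f)_{L^2}$.

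The main obstacle I anticipate is the precise handling of the discrete-to-continuum interpolation and the proof that weak limits of discrete gradients are gradients of the weak limit — this is routine in spirit but needs care about which interpolation (piecewise constant on cells vs. $Q_1$-interpolation) is used so that $D_{N,k}$ really becomes $\partial_k$ in the limit, and about the mean-zero normalization so that Poincaré gives genuine $L^2$-compactness on the fixed torus rather than just a gradient bound. A secondary technical point is that $\mC_N$ acts on $\chi_N'$ (mean-zero functions), so one must check that $\sum_l D^*_{N,l} f$ has mean zero on the torus, which it does since each $D^*_{N,l}$ is a discrete derivative; and that $f$ being supported in $Q(z)$ causes no issue on $\T^d$ since we are simply viewing $f$ as an $L^2(\T^d)$ function. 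The pieces that are genuinely imported — the Poincaré inequality, the Lax–Milgram solvability, the uniform bounds — are exactly Proposition \ref{proposition discrete tori} and its corollary, so the new content here is solely the limit passage.
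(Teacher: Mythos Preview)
Your proposal is correct and follows essentially the same route as the paper: switch to the scaled setting via Lemma~\ref{Trafo unscaled scaled} with $\alpha=1$, use the uniform gradient bound from Proposition~\ref{proposition discrete tori} to extract a weakly convergent subsequence of the piecewise-constant extensions of $D_N v_N$, identify the weak limit as $Du$ for the continuum solution $u$ by passing to the limit in the discrete weak formulation against smooth test functions, and conclude by pairing against $f$. The paper organizes this into three explicit steps (weak compactness of $D_N u_N$, identification $v=Du$ via Poincar\'e and integration by parts, verification of the limiting equation), which is exactly the ``compactness-plus-identification'' argument you outline; the only cosmetic difference is that the paper handles the error $\|f_N-f\|_{L^2}\to 0$ directly for $f\in L^2$ rather than invoking $f\in C^1$ as you do, so your rate $O(L^{-N}\|f\|_{C^1})$ is stronger than needed but not wrong given the application.
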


\begin{proof}
We first apply Lemma \ref{Trafo unscaled scaled} with $\alpha =1$ to switch to the scaled setting,
$$
\left( \sum_l \nabla_l^* f^N_l, \mC^q \sum_l \nabla_l^* f^N \right)
= \left\langle \div_N^* f, \mC_N \div_N^*f \right\rangle_{l^2}.
$$

Let $u_N = \mC_N \div_N^* f \in \chi_N'$ be the unique solution to $\mA _N u_N = \div_N^* f_N$ (see Proposition~\ref{proposition discrete tori}) with $f_N:= f\vert_{\Lambda_N'}$.

Extend $u_N$ and $f_N$ to piecewise constant functions on the continuous torus $\T^d$.
Divide $\T^d$ into cubes $Q_x$ of side length $L^{-N}$  with centres $x \in \Lambda_N'$ and define $u_N(y) = u_N(x)$ for all $y \in Q_x$. It follows from the definition of the extension (recall also the definition of $\div_N^*$ in \eqref{discrete divergence}) that
$$
\left\langle \div_N^* f_N, u_N \right\rangle_{l^2}
= \left( \div_N^* f_N, u_N \right)_{L^2(\T^d)}
= \left( f_N, D_N u_N \right)_{L^2(\T^d; \R^d)}.
$$
Let $u \in W^{1,2}(\T^d)$ be the solution (unique up to the addition of constants) to $\mA u = \div^* f$ on $\T^d$. From Proposition \ref{proposition discrete tori} it follows that $D_N u_N$ is uniformly bounded in $L^2(\T^d)$. Below in Steps 1-3 we show that on a subsequence $D_N u_N \rightharpoonup D u$ in $L^2(\T^d)$.
Then
\begin{align*}
&\left( \sum_l \nabla_l^* f^N_l, \mC^q \sum_l \nabla_l^* f^N_l \right)
- \left( f, D \mC \div^* f \right)_{L^2(\T^d;\R^d)}\\
&= \left( f_N, D_N u_N \right)_{L^2(\T^d; \R^d)}
- \left( f, D u \right)_{L^2(\T^d;\R^d)} \\
&= \left( f_N - f, D_N u_N \right)_{L^2(\T^d;\R^d)}
+ \left( f, D_N u_N - D u \right)_{L^2(\T^d;\R^d)}\\
&\leq \Vert f_N - f \Vert_{L^2} \Vert D_N u_N \Vert_{L^2} + \left( f, D_N u_N - D u \right)_{L^2}.
\end{align*}
By construction of $f_N$ and by the bound on $D_N u_N$ the first term tends to zero as $N \rightarrow \infty$ and by the weak convergence of $D_N u_N$ this also holds for the second term. Thus the claim follows.


\underline{Step 1:} From the bound on $D_N u_N$ we get existence of $v \in L^2(\T^d;\R^d)$ such that on a subsequence $N_k$
$$
D_{N_k} u_{N_k} \rightharpoonup v \textrm{  in  } L^2(\T^d;\R^d).
$$

\underline{Step 2:}
There is $u \in L^2(\T^d)$ such that $v = Du$ (in the sense of weak derivatives).

\underline{Proof:} Let $\bar{u}_N := \frac{1}{\vert\T^d\vert} \int_{\T^d} u_N \de x$.
We use the discrete Poincar\'{e} inequality and Step 1 to get
\begin{align*}
\Vert u_N - \bar{u}_N \Vert _{L^{2}}
\leq C \Vert D_N u_N \Vert _{L^2}
\leq C.
\end{align*}

Thus there is a subsequence (not denoted explicitly in the following) and $u \in L^{2}(\T^d)$ such that $u_N - \bar{u}_N \rightharpoonup u \,\, \textrm{in} \, L^{2}(\T^d)$.

We take $\varphi \in W^{1,2}(\T^d;\R^d)$ to obtain the following convergence as $N \rightarrow \infty$:
$$
\int D_N u_N \cdot \varphi \,\de x
= \int (u_N - \bar{u}_N) \div_N^* \varphi \,\de x
\rightarrow \int u \div^* \varphi \,\de x
= \int Du \cdot \varphi \,\de x.
$$

On the other hand we have by Step 1 for any $\varphi \in L^2(\T^d; \R^d)$
$$
\int_{\T^d} D_N u_N \varphi \,\de x \rightarrow \int_{\T^d} v \varphi \,\de x
$$
as $N$ tends to infinity.
Thus $u$ is weakly differentiable and $D u = v$.

\underline{Step 3:} The function $u$ in Step 2 satisfies the equation $\mA u = \div^* f$ and thus is unique up to the addition of constants.

\underline{Proof:} For $\varphi \in C^1(\T^d)$ let $\varphi_N$ be the function obtained by restriction to $\Lambda_N'$
 and piecewise constant extension to $\T^d$ and insert $\varphi_N$
 into the weak form of the equation satisfied by $u_N$ to obtain
$$
\sum_{i,j}\int a^N_{i,j} D_{N,i} u_N D_{N,j} \varphi_N \, \de x
= \int f_N \cdot D_N \varphi_N \, \de x.
$$
Now $D_{N,j} \varphi_N$ converges uniformly to $\partial_j \varphi$. Hence the left hand side converges to\linebreak $ \sum_{i,j} \int a_{i,j} \partial_i u \partial_j \varphi \,\de x$  and the right hand side converges to $\int f \cdot D \varphi \, \de x$. Thus
$$ \sum_{i,j} \int a_{i,j} \partial_i u \partial_j \varphi\,\de x = \int f \cdot D \varphi \,\de x$$
for all $\varphi \in C^1(\T^d)$. By density the identity holds for all $\varphi \in W^{1,2}(\T^d)$ and this finishes the proof.

\end{proof}


The proof of the following Proposition is very similar to the proof of Proposition \ref{convergence torus}. 
We just have to take into account that in this case we have to work on increasing tori.

\begin{prop}\label{convergence whole space}
For $J_a, J_b \in L^2(\R^d; \R^d)$ with compact support in $Q(a)$ and $Q(b)$ respectively and the corresponding scaled functions $J_a^N, J_b^N$ as defined in \eqref{scaled J} it holds on a subsequence
$$
\left( \sum_l \nabla_l^* J_a^N, \mC^q \sum_l \nabla_l^* J_b^N \right)
\rightarrow
\left( J_a, D \mC \div^* J_b \right)_{L^2(\R^d;\R^d)}.
$$
\end{prop}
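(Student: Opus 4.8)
The plan is to mimic the proof of Proposition \ref{convergence torus}, with the only essential change being that the relevant tori $\T^d_{R_N}$ now grow as $N\to\infty$ (since here $\alpha<1$, we have $R_N = L^{(1-\alpha)N}\to\infty$), so the limiting object lives on all of $\R^d$. First I would apply Lemma \ref{Trafo unscaled scaled}(2) with the relevant $\alpha<1$ to rewrite
$$
\left( \sum_l \nabla_l^* J_a^N, \mC^q \sum_l \nabla_l^* J_b^N \right)
= \left\langle \div_N^* J_a, \mC_N \div_N^* J_b \right\rangle_{l^2},
$$
and introduce $u_N = \mC_N \div_N^* J_b \in \chi_N'$, the unique weak solution of $\mA_N u_N = \div_N^* J_b$ on $\Lambda_N'$ provided by Proposition \ref{proposition discrete tori}. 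Extending $u_N$ and $J_b$ to piecewise constant functions on $\T^d_{R_N}$ (cubes of side $L^{-\alpha N}$ centred at points of $\Lambda_N'$, embedded in $\R^d$), the discrete summation-by-parts identity gives $\langle \div_N^* J_b, u_N\rangle_{l^2} = (J_b, D_N u_N)_{L^2}$ and likewise $\langle\div_N^* J_a,\mC_N\div_N^* J_b\rangle_{l^2} = (J_a, D_N u_N)_{L^2}$, all integrals being effectively over the fixed compact sets $Q(a)$, $Q(b)$ once $N$ is large enough that $Q(a),Q(b)\subset \T^d_{R_N}$.

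Next, from the second, $N$-uniform bound in Proposition \ref{proposition discrete tori}, $\|D_N u_N\|_{l^2}\le C\|J_b\|_{l^2}$, so $D_N u_N$ is bounded in $L^2(\R^d;\R^d)$; passing to a subsequence, $D_N u_N \rightharpoonup v$ in $L^2(\R^d;\R^d)$ for some $v$. The next step is to identify $v = Du$ for some $u \in W^{1,2}_{\loc}(\R^d)$ with $Du \in L^2(\R^d)$. Here I follow Step 2 of Proposition \ref{convergence torus} but \emph{locally}: on any fixed ball $B_R$, the discrete Poincar\'e inequality on that ball controls $\|u_N - \bar u_N^{B_R}\|_{L^2(B_R)}$ by $C(R)\|D_N u_N\|_{L^2}$, so after a diagonal argument over $R\in\N$ we get $u_N - c_N \rightharpoonup u$ in $L^2_{\loc}(\R^d)$ for suitable constants $c_N$, and testing $\int D_N u_N\cdot\varphi = \int(u_N-c_N)\div_N^*\varphi$ against $\varphi \in C_c^\infty(\R^d;\R^d)$ (using that $\div_N^*\varphi \to \div^*\varphi$ uniformly with compact support) shows $u$ is weakly differentiable with $Du = v$; the global $L^2$ bound on $v$ shows $Du \in L^2(\R^d)$. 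Then, as in Step 3, inserting piecewise-constant restrictions $\varphi_N$ of test functions $\varphi\in C_c^\infty(\R^d)$ into the weak form $\sum_{i,j}\int a_{ij}^N D_{N,i}u_N D_{N,j}\varphi_N = \int J_b\cdot D_N\varphi_N$ and using $a_{ij}^N\to a_{ij}$ (Lemma \ref{convergence of q}) together with uniform convergence of $D_{N,j}\varphi_N \to \partial_j\varphi$ yields $\sum_{i,j}\int a_{ij}\partial_i u\,\partial_j\varphi = \int J_b\cdot D\varphi$ for all $\varphi\in C_c^\infty(\R^d)$, i.e. $\mA u = \div^* J_b$ on $\R^d$; by the Lax--Milgram statement recorded before Proposition \ref{convergence torus} this solution is unique up to constants, so $v = Du = D\mC\div^* J_b$ is independent of the subsequence. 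Finally, splitting
$$
(J_a, D_N u_N)_{L^2} - (J_a, D\mC\div^* J_b)_{L^2}
= (J_a, D_N u_N - Du)_{L^2}
$$
(note $J_a$ is fixed, so unlike in Proposition \ref{convergence torus} there is no $f_N - f$ term) and using $J_a\in L^2(\R^d;\R^d)$ against the weak convergence $D_N u_N\rightharpoonup Du$ gives the claimed convergence.

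The main obstacle, and the only real departure from Proposition \ref{convergence torus}, is the noncompactness of the limiting domain $\R^d$: one must replace the single weak-compactness extraction on the fixed torus by a localized argument (local Poincar\'e inequalities on balls plus a diagonal subsequence) to produce a limit function $u\in W^{1,2}_{\loc}(\R^d)$ with globally $L^2$ gradient, and one must check that the test-function manipulations in Steps 2--3 go through with $C_c^\infty(\R^d)$ test functions while the tori $\T^d_{R_N}$ swell to fill $\R^d$. Since $J_a$ and $J_b$ have fixed compact support, all the pairings actually localize to a fixed bounded region for $N$ large, which is what makes the extraction of the subsequence and the identification of the limit equation work exactly as in the torus case; I do not expect any further subtlety beyond bookkeeping.
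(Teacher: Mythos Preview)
Your proposal is correct and follows essentially the same route as the paper: rescale via Lemma \ref{Trafo unscaled scaled}, use the uniform gradient bound from Proposition \ref{proposition discrete tori} to extract a weak $L^2(\R^d;\R^d)$ limit for $D_N u_N$, identify it via local Poincar\'e inequalities and a diagonal argument as $Du$ for the (unique up to constants) solution of $\mA u = \div^* J_b$ on $\R^d$, and conclude by pairing against $J_a$. One small slip: the discrete pairing actually equals $((J_a)_N, D_N u_N)_{L^2}$ with $(J_a)_N$ the piecewise constant restriction--extension of $J_a$, so a term $((J_a)_N - J_a, D_N u_N)$ does appear exactly as in Proposition \ref{convergence torus}, but it vanishes in the limit since $(J_a)_N \to J_a$ in $L^2$ and $\|D_N u_N\|_{L^2}$ is bounded.
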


\begin{proof}

First, apply Lemma 4.6 to switch to the scaled setting:
$$
\left( \sum_l \nabla_l^* J_a^N, \mC^q \sum_l \nabla_l^* J_b^N \right)
= \left\langle \div_N^* J_a, \mC_N \div_N^*J_b \right\rangle_{l_2}.
$$

Let $u_N = \mC_N \div_N^* J_b \in \chi_N'$ be the unique solution to $\mA _N u_N = \div_N^* (J_b)_N$ (see Proposition~\ref{proposition discrete tori}), where $(J_b)_N := J_b \vert_{\Lambda_N'}$.

As in the proof of Proposition \ref{convergence torus} we extend $u_N$ and $(J_a)_N, (J_b)_N$ piecewise constant to the continuous torus $\T^d_{R_N}$ with $R_N = L^{(1-\alpha)N}$. It follows from the definition of the extension that
$$
\left\langle \div_N^* (J_a)_N, u_N \right\rangle_{l_2}
= \left( \div_N^* (J_a)_N, u_N \right)_{L_2(\T^d_{R_N})}
= \left( (J_a)_N, D_N u_N \right)_{L_2(\T^d_{R_N};\R^d)}.
$$

Let $u \in W_{\loc}^{1,2}(\R^d)$, $Du \in L^2(\R^d),$ be the solution (unique up to the addition of constants) to $\mA u = \div^* J_b$ on $\R^d$. Let $I_N := \left( - \frac{R_N}{2}; \frac{R_N}{2} \right)$ be the fundamental domain of $\T^d_{R_N}$. We have a uniform bound on $\chi_{I_N} D_N u_N$ in $L^2(\R^d;\R^d)$ by Proposition \ref{proposition discrete tori} and as before we will show that on a subsequence $\chi_{I_N} D_N u_N \rightharpoonup D u$ in $L^2(\R^d;\R^d)$. The claim then follows as in the proof of Proposition \ref{convergence torus}.

\underline{Step 1:} By the uniform bound on $\Vert \chi_{I_{N_k}}D_N u_N \Vert_{L^2(\R^d;\R^d)}$ there is a subsequence $N_k$ and\linebreak $v \in L^2(\R^d;\R^d)$ such that
$$
\chi_{I_{N_k}} D_{N_k} u_{N_k} \rightharpoonup v \,\, \textrm{in} \, L^2(\R^d;\R^d).
$$

\underline{Step 2:} There is $u \in L_{\loc}^2(\R^d)$ such that $v = D u$.

\underline{Proof:} Fix $R>0$ and let $\bar{u}_N := \frac{1}{\vert B_R \vert} \int_{B_R} u_N \de x$. We use the discrete Poincar\'{e} inequality and Step 1 to see
$$
\Vert u_N - \bar{u}_N \Vert _{L^2(B_R)} \leq C(R) \Vert D_N u_N \Vert _{L^2(B_R;\R^d)} \leq C(R).
$$
Thus there is a subsequence $N_k^R$ and $u_{R} \in L^2(B_R)$ such that
$$ 
u_{N_k^R} - \bar{u}_{N_k^R} \rightharpoonup u_{R} \,\, \textrm{in} \, L^2(B_R).
$$
This can be done on arbitrary balls in $\R^d$, and by a diagonal sequence argument (consider the above subsequence $N_k^R$, on $B_{R'}$ it is also bounded, so there is a subsequence and a limit $u_{R'}$ on $B_{R'}$, but on $B_{R'}\cap B_R$ it must hold $u_R = u_{R'}$) there is $u \in L^2_{\loc}(\R^d)$ such that on a subsequence
$$
u_{N} - \bar{u}_{N} \rightharpoonup u \,\, \textrm{in} \, L_{\loc}^2(\R^d).
$$

The rest of the argument is exactly as in the proof of Proposition \ref{convergence torus}.

\underline{Step 3:} $u$ satisfies the equation $\mA u = \div^* f$ and is thus unique up to the addition of constants.

\underline{Proof:} As before (start with a function $\varphi \in C_{\text c}^1(B_R)$ and 
for $L^{(1-\alpha)N} > 2R$ extend $\varphi$ to a function with
period $L^{(1- \alpha)N}$ to deduce the weak form of the limit  equation in $\R^d$  with test function $\varphi$).
\end{proof}


\subsection{Smallness of error terms}\label{errorterms}

Recall the definition of the large field regulator $w_N^{\Lambda_N}$ in
\eqref{large field regulator}.
\begin{lemma} \label{last scale bound}
Let $f \in C^\infty(\T^d;\R^d)$.
For $\xi = 0$ and $\xi = -\mC^q \nabla_l^* f^N$
with $f^N = L^{-N\frac{d}{2}} f(L^{-N} x)$
there is a constant $C$, independent of $N$, such that
$$
w_N^{\Lambda_N} (\xi) \leq C.
$$
\end{lemma}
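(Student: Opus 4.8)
The plan is to treat $\xi=0$ trivially, to rewrite $\xi=-\mC^q\nabla_l^*f^N$ in the scaled variables of Lemma~\ref{Trafo unscaled scaled} taken with scaling exponent $\alpha=1$, and then to feed the resulting uniform discrete elliptic bounds into the definition \eqref{large field regulator} of $w_N^{\Lambda_N}$. For $\xi=0$ nothing is needed: $G_{N,x}(0)=g_{N,x}(0)=0$ by \eqref{norm G} and \eqref{norm g}, so $w_N^{\Lambda_N}(0)=1$. For $\xi=-\mC^q\nabla_l^*f^N$, let $u_N:=\mC_N\div_N^*f$ be the solution to $\mA_N u_N=\div_N^*f$ on $\Lambda_N'$ supplied by Proposition~\ref{proposition discrete tori} with $\alpha=1$ (so $R_N=1$), where $f$ is restricted to $\Lambda_N'$. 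Unwinding the kernel identity of Lemma~\ref{Trafo unscaled scaled}(1) together with the definitions of $f^N$, $D_{N,l}^*$ and $\mC_N$, and substituting $x=L^Nx'$, one obtains
\[
\xi(x)=-L^{N-N\frac{d}{2}}\,u_N(L^{-N}x)
\qquad\text{and hence}\qquad
|\nabla^{s}\xi(x)|^2=L^{2N(1-s)-Nd}\,|(D_N^{s}u_N)(L^{-N}x)|^2
\]
for $s=1,2,3,4$, the second relation following from the first because a discrete gradient $\nabla_i$ applied to $x\mapsto u_N(L^{-N}x)$ produces a factor $L^{-N}$ and replaces $\nabla_i$ by $D_{N,i}$ evaluated at $L^{-N}x$.

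The second step bounds the scaled derivatives uniformly in $N$. Differentiating the strong form $\mA_N u_N=\div_N^*f$ gives $\mA_N D_N^{s}u_N=\div_N^*(D_N^{s}f)$, so the argument in the proof of Corollary~\ref{discrete regularity} (Proposition~\ref{proposition discrete tori} followed by the discrete Sobolev embedding), applied to $D_N^{s}u_N$ and run with scaling exponent $1$ so that the factor $L^{2(1-\alpha)N}$ drops out, yields
\[
\sup_{x'\in\Lambda_N'}|(D_N^{s}u_N)(x')|\le C_s\,\|f\|_{C^{d+s}}\qquad(s=1,2,3,4),
\]
with $C_s$ independent of $N$; here one uses that $\lambda_{\min}^N$ is bounded away from $0$ uniformly in $N$ (Lemma~\ref{convergence of q}) and that difference quotients of the smooth $f$ are dominated by its continuum derivatives. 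Since $f\in C^{\infty}(\T^d;\R^d)$ these bounds are finite and $N$-free, and only orders $s\le 4$ enter \eqref{norm G} and \eqref{norm g}.

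Combining the two steps, the weights in \eqref{norm G} and \eqref{norm g} cancel exactly: for $s=1,2,3$,
\[
L^{2(s-1)N}\,|\nabla^{s}\xi(x)|^2=L^{-Nd}\,|(D_N^{s}u_N)(L^{-N}x)|^2\le C\,L^{-Nd},
\]
and likewise $L^{(2s-2)N}\sup_{y\in\Lambda_N}|\nabla^{s}\xi(y)|^2\le C\,L^{-Nd}$ for $s=2,3,4$. Since $|\Lambda_N|=L^{Nd}$, summing over $x\in\Lambda_N$ shows that both $\sum_{x\in\Lambda_N}G_{N,x}(\xi)$ and $\sum_{x\in\Lambda_N}g_{N,x}(\xi)$ are bounded by a constant independent of $N$; as $G_{N,x},g_{N,x}\ge 0$, \eqref{large field regulator} then gives $w_N^{\Lambda_N}(\xi)\le e^{|\omega|C}$, which proves the lemma.

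All of this is routine bookkeeping once the scaling is set up; the only point that needs care — and the structural reason the lemma is true — is that the choice $\alpha=1$ removes the blow-up factor $L^{2(1-\alpha)N}$ from Corollary~\ref{discrete regularity}, leaving an $N$-uniform right-hand side. For test functions living on a coarser scale $L^{\alpha N}$ with $\alpha<1$, as later in Lemma~\ref{scope alpha}, this factor is present and the analogue of the present bound degenerates, which is exactly why Theorem~\ref{Smoothed covariance} has to absorb it into the exponential decay $\eta^N$.
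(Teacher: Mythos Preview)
Your proof is correct and follows the same approach as the paper: the case $\xi=0$ is immediate, and for $\xi=-\mC^q\nabla_l^*f^N$ you rescale via Lemma~\ref{Trafo unscaled scaled} with $\alpha=1$ and observe that the resulting powers of $L^N$ cancel exactly against the weights in \eqref{norm G} and \eqref{norm g}. The paper's proof is simply terser---it records the scaling identity and states that ``every growing factor \ldots\ is perfectly annihilated''---whereas you spell out the two points the paper leaves implicit: the uniform-in-$N$ bound $\sup_{x'}|D_N^{s}u_N(x')|\le C\|f\|_{C^{d+s}}$ via Corollary~\ref{discrete regularity} at $\alpha=1$, and the fact that the residual $L^{-Nd}$ per site is compensated by the $L^{Nd}$ terms in the sum defining $w_N^{\Lambda_N}$.
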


\begin{proof}
For $\xi = 0$ one computes $w_N^{\Lambda_N} (\xi) =1$ (read carefully the definition of 
the large field regulator, see \eqref{large field regulator}).
For $\xi = -\mC^q \nabla_l^* f^N$ we use \eqref{eq:scaling_solution} in Lemma \ref{Trafo unscaled scaled} for $\alpha=1$ to see that
$$
\nabla^s \mC^q \nabla_l^* f^N(x)
= L^{-N(\frac{d}{2}-1+s)} \mC_N D_N^{s} D_{N,l}^* f(x')  
\quad \mbox{with $x = L^{N} x'$ and $x' \in \T^d$}.
$$
Thus every growing factor $L^N$ in 
$g_{N,x}(\xi)$ and $G_{N,x}(\xi)$ (see \eqref{norm g} and \eqref{norm G}) is perfectly annihilated.
\end{proof}

\begin{lemma}\label{scope alpha}
For $g^N(x) = L^{- \alpha N \frac{d}{2}} g(L^{-\alpha N} x)$, $g \in C^{\infty}_c(\R^d)$, it holds
$$
\left| \mC^q \nabla_l^* g^N \right|_{N, \Lambda_N}
\leq C \, \tau(\alpha)^N
$$
where $C$ is independent of $N$ and $\tau(\alpha) = L^{ (1- \alpha)\left( \frac{d}{2} + 4 \right)}$.
\end{lemma}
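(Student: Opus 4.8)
The plan is to estimate the norm $\left| \mC^q \nabla_l^* g^N \right|_{N,\Lambda_N}$ directly from its definition by writing the relevant discrete derivatives of $\mC^q \nabla_l^* g^N$ in the scaled setting and tracking the powers of $L$. First I would recall that
$$
\left| \varphi \right|_{N,\Lambda_N}
= \max_{1 \leq s \leq 3} \sup_{x \in \Lambda_N} \frac{1}{h} \, L^{N\left(\frac{d-2}{2} + s\right)} \left| \nabla^s \varphi(x) \right|,
$$
so the task reduces to bounding $L^{N\left(\frac{d-2}{2}+s\right)} \left| \nabla^s \mC^q \nabla_l^* g^N(x) \right|$ for $s = 1,2,3$. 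The key input is the scaling identity \eqref{eq:scaling_solution} from Lemma \ref{Trafo unscaled scaled}, applied with the parameter $\alpha$ (not $\alpha = 1$ as in Lemma \ref{last scale bound}); iterating the argument there for higher discrete derivatives gives
$$
\nabla^s \mC^q \nabla_l^* g^N(x)
= L^{-\alpha N\left(\frac{d}{2} - 1 + s\right)} \left(D_N^s \mC_N D_{N,l}^* g\right)(L^{-\alpha N} x),
$$
where $\mC_N$ is the inverse of the scaled operator $\mA_N$ and $D_N$ the scaled difference operator of spacing $L^{-\alpha N}$.

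Next I would invoke Corollary \ref{discrete regularity} (discrete elliptic regularity for $\mA_N$, which rests on Proposition \ref{proposition discrete tori}), differentiating the scaled equation $s$ times so that $D_N^s \mC_N D_{N,l}^* g$ solves an equation of the same type with right-hand side involving $D_N^{s+1} g$. This yields a bound
$$
\sup_{x'} \left| D_N^s \mC_N D_{N,l}^* g (x') \right|
\leq C \, L^{2(1-\alpha)N} \left\| g \right\|_{C^{d+s+1}},
$$
the factor $L^{2(1-\alpha)N}$ being exactly the blow-up of the Poincaré/Lax--Milgram constant on the torus $\T^d_{R_N}$ with $R_N = L^{(1-\alpha)N}$ appearing in Proposition \ref{proposition discrete tori}. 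Combining the two displays, the power of $L$ in the $s$-th term of the norm is
$$
N\left(\frac{d-2}{2} + s\right) - \alpha N\left(\frac{d}{2} - 1 + s\right) + 2(1-\alpha)N
= (1-\alpha)N\left(\frac{d}{2} - 1 + s\right) + 2(1-\alpha)N
= (1-\alpha)N\left(\frac{d}{2} + 1 + s\right),
$$
which is largest at $s = 3$, giving the exponent $(1-\alpha)\left(\frac{d}{2}+4\right)$ and hence $\tau(\alpha) = L^{(1-\alpha)\left(\frac{d}{2}+4\right)}$; the constant $C$ absorbs $h^{-1}$ and the $C^{d+4}$-norm of the fixed function $g$ and is independent of $N$. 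One also checks $\tau(1) = 1$ and that $\tau$ is decreasing in $\alpha$, as claimed.

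The main obstacle is making the elliptic-regularity step genuinely uniform in $N$: Corollary \ref{discrete regularity} is stated for $g \in \C$ and a single application of $D_{N,l}^*$, so I must verify that differentiating the strong form of $\mA_N u_N = D_{N,l}^* g$ repeatedly is legitimate on the discrete torus (it is, since $\mA_N$ commutes with the difference operators $D_{N,j}$) and that the constant $C$ there does not secretly depend on $N$ through $\lambda_{\min}^N$ — this is precisely where the convergence $q^{N} \to \bar q$ of Lemma \ref{convergence of q}, which keeps $\lambda_{\min}^N \geq \lambda_{\min} - \epsilon > 0$, is used. A secondary point of care is bookkeeping the exponents of $L$ correctly across the three nested scalings (the definition of the norm, the scaling identity \eqref{eq:scaling_solution}, and the $L^{2(1-\alpha)N}$ from the regularity estimate), which is routine but must be done cleanly to land on $\frac{d}{2}+4$.
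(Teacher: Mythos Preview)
Your proposal is correct and follows essentially the same route as the paper: both arguments pass to the scaled setting via Lemma~\ref{Trafo unscaled scaled}, invoke the $L^{2(1-\alpha)N}$ bound from Corollary~\ref{discrete regularity}, and perform the identical exponent bookkeeping $(1-\alpha)\bigl(\tfrac{d-2}{2}+s\bigr)+2(1-\alpha)=(1-\alpha)\bigl(\tfrac{d}{2}+1+s\bigr)$, maximized at $s=3$. The only cosmetic difference is that the paper first moves all $s+1$ derivatives onto $g^N$ through the kernel by summation by parts (obtaining $\mC_N(D_N^*)^{s+1}g$) before applying Corollary~\ref{discrete regularity}, whereas you keep the expression as $D_N^s\mC_N D_{N,l}^* g$ and commute $D_N^s$ through $\mA_N$; these are equivalent manipulations.
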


\begin{proof}
We use Lemma \ref{Trafo unscaled scaled} to get
\begin{align*}
& \left| \mC^q \nabla_l^* g^N \right|_{N, \Lambda_N}
\\
& = \max_{1\leq s\leq3}
 ~ \sup_{x \in \Lambda_N}
\frac{1}{h} ~ L^{ N\left( \frac{d-2}{2} + s \right)}
\left| \nabla^s \mC^q \nabla_l^* g^N \right|
\\
& \leq \max_{1\leq s\leq3}
 ~ \sup_{x \in \Lambda_N}
\frac{1}{h} ~ L^{N\left( \frac{d-2}{2} + s \right)}
\left|  \sum_y C^q_N(x-y) \left(\nabla^*\right)^{s+1} g^N(y)\right|
\\
& = \max_{1\leq s\leq3}
 ~ \sup_{x \in \Lambda_N}
\frac{1}{h} ~ L^{N\left( \frac{d-2}{2} + s \right)}
\left|  \sum_y C^q_N(x-y) L^{-\alpha N\left( \frac{d}{2} + s + 1 \right)} \left(D_N^*\right)^{s+1} g\left(\frac{y}{L^{\alpha N}}\right)\right|
\\
& = \max_{1\leq s\leq3}
~ \frac{1}{h} ~
L^{N (1- \alpha)\left( \frac{d-2}{2} + s \right)}
~ \sup_{x' \in \Lambda'_N}
\left| \mC_N \left(D_N^*\right)^{s+1} g(x') \right|.
\end{align*}
Apply Corollary \ref{discrete regularity} to see
$$
\sup_{x' \in \Lambda'_N}
\left| \mC_N \left(D_N^*\right)^{s+1} f(x') \right|
\leq C \, L^{2N(1-\alpha)} \left\| g \right\|_{C^{d+s+1}}.
$$
Thus
\begin{align*}
\left| \mC^q \nabla_l^* g^N \right|_{N, \Lambda_N}
\leq C \, \max_{1\leq s\leq3}
~ \frac{1}{h} ~
L^{N (1- \alpha)\left( \frac{d-2}{2} + s \right)}
\, L^{2(1-\alpha)N}
\leq C \, L^{N (1- \alpha)\left( \frac{d}{2} + 4 \right)}.
\end{align*}
Set $\tau(\alpha) = L^{ (1- \alpha)\left( \frac{d}{2} + 4 \right)}$ to obtain the claim.
\end{proof}


\section*{Acknowledgements}
This work was supported by the CRC 1060 {\it The mathematics of emergent effects}. This paper is based on the author's
MSc thesis. I would like to thank my advisor S. M\"uller for his advice and many inspiring discussions.

  \bibliography{meinbib}
    \bibliographystyle{alpha}

\end{document}